\newcommand{\todo}[1]{{#1}}
\numberwithin{equation}{section}
\theoremstyle{plain}
\newtheorem{theorem}[equation]{Theorem}
\newtheorem*{theorem*}{Theorem}
\newtheorem{lemma}[equation]{Lemma}
\newtheorem{proposition}[equation]{Proposition}
\newtheorem{corollary}[equation]{Corollary}
\newtheorem*{corollary*}{Corollary}
\theoremstyle{definition}
\newtheorem{definition}[equation]{Definition}
\newtheorem{example}[equation]{Example}
\tikzstyle{dot}=[circle, draw=black, fill=black!25, inner sep=.4ex]
\newif\ifvflip\pgfkeys{/tikz/vflip/.is if=vflip}
\newif\ifhflip\pgfkeys{/tikz/hflip/.is if=hflip}
\newif\ifhvflip\pgfkeys{/tikz/hvflip/.is if=hvflip}
\newlength\morphismheight
\newlength\wedgewidth
\tikzset{width/.initial=1mm}
\tikzstyle{morphism}=[font=\small,morphismshape]
\newcommand{\tinymult}[1][dot]{
\smash{\raisebox{-2pt}{\hspace{-5pt}\ensuremath{\begin{pic}[scale=0.4,yscale=-1]
    \node (0) at (0,0) {};
    \node[#1, inner sep=1.5pt] (1) at (0,0.55) {};
    \node (2) at (-0.5,1) {};
    \node (3) at (0.5,1) {};
    \draw (0.center) to (1.center);
    \draw (1.center) to [out=left, in=down, out looseness=1.5] (2.center);
    \draw (1.center) to [out=right, in=down, out looseness=1.5] (3.center);
    \node[#1, inner sep=1.5pt] (1) at (0,0.55) {};
\end{pic}
}\hspace{-3pt}}}}
\newcommand{\tinyunit}[1][dot]{
\smash{\raisebox{1pt}{\hspace{-3pt}\ensuremath{\begin{pic}[scale=0.4,yscale=-1]
    \node (0) at (0,0) {};
    \node[#1, inner sep=1.5pt] (1) at (0,0.55) {};
    \draw (0.center) to (1.north);
\end{pic}
}\hspace{-1pt}}}}
\newcommand\sxto[1]{\mathbin{\smash{
\begin{tikzpicture}[baseline={([yshift=-1pt]
current bounding box.south)}]
    \node (A) at (0,0) [inner xsep=0pt, inner ysep=1pt, minimum width=0.15cm] {\ensuremath{\scriptstyle #1}};
    \draw [->, line width=0.4pt, line cap=round]
        ([xshift=-2.5pt] A.south west)
        to ([xshift=3pt] A.south east);
\end{tikzpicture}}}}
\newenvironment{pic}[1][]
{\begin{aligned}\begin{tikzpicture}[font=\tiny,#1]}
{\end{tikzpicture}\end{aligned}}
\newcommand{\cat}[1]{\ensuremath{\mathbf{#1}}}
\newcommand{\op}{\ensuremath{{}^{\text{op}}}}
\newcommand{\id}[1][]{\ensuremath{\mathrm{id}_{#1}}}
\newcommand{\inprod}[2]{\ensuremath{\langle #1 \,|\, #2 \rangle}}
\DeclareMathOperator{\st}{st}
\DeclareMathOperator{\cst}{cst}
\DeclareMathOperator{\dst}{dst}
\DeclareMathOperator{\Tr}{Tr}
\begin{document}
  \title{Reversible Monadic Computing} 
  \author{Chris Heunen}
  \address{Department of Computer Science, University of Oxford, United Kingdom}
  \email{heunen@cs.ox.ac.uk}
  \author{Martti Karvonen}
  \address{Department of Mathematics and Systems Analysis, Aalto University, Finland}
  \email{martti.karvonen@aalto.fi}
  \thanks{Supported by the Engineering and Physical Sciences Research Council Fellowship EP/L002388/1. We thank an anonymous referee for Example~\ref{ex:emnonfem}, and Jorik Mandemaker, Sean Tull, and Maciej Pirog for helpful discussions.}
 \begin{abstract}
  We extend categorical semantics of monadic programming to reversible computing, by considering monoidal closed dagger categories: the dagger gives reversibility, whereas closure gives higher-order expressivity. 
  We demonstrate that Frobenius monads model the appropriate notion of coherence between the dagger and closure by reinforcing Cayley's theorem; by proving that effectful computations (Kleisli morphisms) are reversible precisely when the monad is Frobenius; by characterizing the largest reversible subcategory of Eilenberg--Moore algebras; and by identifying the latter algebras as measurements in our leading example of quantum computing. Strong Frobenius monads are characterized internally by Frobenius monoids. 
\end{abstract}
\maketitle

\section{Introduction} 

The categorical concept of a \emph{monad} has been tremendously useful in programming, as it extends purely functional programs with nonfunctional effects. For example, using monads one can extend a functional programming language with  nondeterminism, probabilism, stateful computing, error handling, read-only environments, and input and output~\cite{wadler:monads}. Haskell incorporates monads in its core language. On the theoretical side, there are satisfyingly clean categorical semantics. Simply typed $\lambda$-calculus, that may be regarded as an idealized functional programming language, takes semantics in Cartesian closed categories~\cite{lambekscott:categoricallogic}. The functional programming concept of a monad is modeled by the categorical concept of a monad~\cite{moggi:monads}.

In classical computation it is not always possible to reconstruct the input to an algorithm from its output. However, by using auxiliary bits, any classical computation can be turned into a \emph{reversible} one~\cite{toffoli:reversible}. Such a computation uses invertible primitive gates, and composition preserves invertibility. As discarding information requires work, reversible computations could in principle be implemented at higher speeds. The only operation costing power is the final discarding of auxiliary bits.

This is brought to a head in \emph{quantum computing}, where any deterministic evolution of quantum bits is invertible, unlike the eventual measurement that converts quantum information to classical information. Another novelty in quantum computing is that it is impossible to copy or delete quantum information. This leads to a linear type theory of resources rather than a classical one~\cite{szabo:proofs}: quantum computing takes semantics in monoidal categories, rather than Cartesian ones~\cite{abramskycoecke:categoricalsemantics}.

Led by quantum computing, this article extends the categorical semantics of monadic programming to reversible computing. To allow for a linear type theory we consider monoidal closed categories. To allow for reversible computations, we consider \emph{dagger categories}; in general these correspond to bidirectional computations rather than invertible ones, which in the quantum case comes down to the same thing. To allow for monadic effects, we introduce \emph{Frobenius monads}. In the presence of a dagger, any monad gives rise to a comonad; a Frobenius monad is one that interacts with its comonad counterpart via the following \emph{Frobenius law}:
\begin{equation}\label{eq:frobeniuslaw}
  \begin{pic}
   \draw (0,0) to (0,1) to[out=90,in=180] (.5,1.5) to (.5,2);
   \draw (.5,1.5) to[out=0,in=90] (1,1) to[out=-90,in=180] (1.5,.5) to (1.5,0);
   \draw (1.5,.5) to[out=0,in=-90] (2,1) to (2,2);
   \node[dot] at (.5,1.5) {};
   \node[dot] at (1.5,.5) {};
  \end{pic}
  \quad = \quad
  \begin{pic}[xscale=-1]
   \draw (0,0) to (0,1) to[out=90,in=180] (.5,1.5) to (.5,2);
   \draw (.5,1.5) to[out=0,in=90] (1,1) to[out=-90,in=180] (1.5,.5) to (1.5,0);
   \draw (1.5,.5) to[out=0,in=-90] (2,1) to (2,2);
   \node[dot] at (.5,1.5) {};
   \node[dot] at (1.5,.5) {};
  \end{pic}
\end{equation}
Here we used the graphical calculus for monoidal categories~\cite{selinger:graphicallanguages,marsden:diagrams}, that will be explained further in Section~\ref{sec:dagger}, along with several examples.\footnote{We often need to reason simultaneously about morphisms \emph{in} a monoidal category and endofunctors \emph{on} it. Unfortunately there is no sound and complete graphical proof calculus that would handle this yet. Therefore we cannot use the graphical calculus exclusively and also have to use traditional commutative diagrams.}

Our main contribution is to take reversal as a primitive and so justify the claim that Frobenius monads are precisely the right notion as follows:
\begin{itemize}

  \item Section~\ref{sec:monoids} justifies the Frobenius law as a necessary (and sufficient) consequence of coherence between the dagger and closure. In a reversible setting, it is natural to consider \emph{involutive} monoids. In a monoidal closed category, any monoid embeds into a canonical one by Cayley's theorem. We prove that this embedding preserves the involution induced by the dagger if and only if the monoid satisfies the Frobenius law. 
  This derivation from first principles is a noncommutative generalization \todo{of}~\cite[Theorem~4.3]{pavlovic:abstraction} with a new proof.

  \item Section~\ref{sec:monads} characterizes Frobenius monads \emph{internally}. Monads are an \emph{external} notion. A good example is the writer monad, that allows programs to keep auxiliary output alongside the computation. These values accumulate according to some monoid. Any monoid gives rise to a strong monad, and Frobenius monoids give rise to strong Frobenius monads. In general this is merely an adjunction and not an equivalence, but we work out that the converse holds in the Frobenius setting. This is a noncommutative generalization of~\cite[Corollary~4.5]{pavlovic:abstraction}. It also generalizes the classic Eilenberg--Watts theorem from homological algebra to categories that are not necessarily abelian. As Frobenius monoids satisfy the very same law~\eqref{eq:frobeniuslaw} as Frobenius monads, only interpreted in a category rather than by endofunctors on it, this also exhibits that reversible settings are closed under categorification. 

  \item We show that the extension of reversible pure computations with effects modeled by a monad results in reversible effectful computations if and only if the monad is a Frobenius monad. More precisely, Section~\ref{sec:kleisli} shows that a monad on a dagger category is a Frobenius monad if and only if the dagger extends to the category of Kleisli algebras. This reinforces that Frobenius monads model the right notion of effects for reversible computing.
  Section~\ref{sec:fem} identifies the largest subcategory of all algebras with this property, which we call \emph{Frobenius--Eilenberg--Moore} algebras. Section~\ref{sec:measurement} exemplifies them in the quantum setting by arguing that they correspond precisely to measurements via \emph{effect handlers}~\cite{plotkinpretnar:handlers}.
\end{itemize}

Frobenius monads have been studied before~\cite{street:ambidextrous,lauda:ambidextrous}, and monads have been used as semantics for quantum computing before~\cite{hasuohoshino:quantum,altenkirchgreen:monad,altenkirchgrattage:qml}, but not in a dagger setting, except for~\cite{pavlovic:abstraction} that deals with the commutative case abstractly. Conversely, reversible programming has been modeled in dagger categories~\cite{bowmanetal:traced}, but not using monads. Daggers and monads have come together in coalgebra before~\cite{jacobs:involutive,jacobs:walks}; the same holds for quantum programming languages~programming languages~\cite{greenetal:quipper,selingervaliron:lambda}, and matrix algebra~\cite{devosdebaerdemacker:matrix}. The current work differs by systematically starting from first principles. 
We intend to fit probabilistic programming into this setup in future work.

\section{Dagger categories}\label{sec:dagger}

Let us model types as objects $A,B,C,\ldots $in a category, and computations as morphisms $f,g,h,\ldots$. To model composite types, we consider \emph{monoidal categories}, where one can not only compose computations in sequence $A \sxto{f} B \sxto{g} C$, but also in parallel $A \otimes B \sxto{f \otimes g} C \otimes D$. This much is standard~\cite{barrwells:computer}. To model \emph{reversible} computations, we need an operation turning a computation $A \sxto{f} B$ into a computation $B \to A$, such that reversing twice doesn't do anything. 

\begin{definition}
  A \emph{dagger} is a functor $\dag\colon\cat{C}\op\to\cat{C}$ satisfying $A^\dag=A$ on objects and $f^{\dag\dag}=f$ on morphisms. A \emph{dagger category} is a category equipped with a dagger. 
\end{definition}

Dagger categories can behave quite different from ordinary (non-dagger) ones, see \textit{e.g.}~\cite[9.7]{hott}. They are especially useful as semantics for quantum computing~\cite{heunenvicary:cqm}. Note that \emph{reversible} computing does not mean computations are \emph{invertible}. An invertible morphism $f$ in a dagger category is \emph{unitary} when $f^\dag=f^{-1}$. Similarly, an endomorphism $f$ is \emph{self-adjoint} when $f=f^\dag$.
As a rule, any structure in sight should cooperate with the dagger. 

\begin{definition}
  A monoidal category is called a \emph{monoidal dagger category} when $(f\otimes g)^\dag=f^\dag \otimes g^\dag$, and all coherence isomorphisms $A \otimes (B \otimes C) \sxto{\alpha} (A \otimes B) \otimes C$, $I \otimes A \sxto{\lambda} A$, and $A \otimes I \sxto{\rho} A$, are unitary. In a \emph{symmetric monoidal dagger category} additionally the swap maps $A \otimes B \sxto{\sigma} B \otimes A$ are unitary.
\end{definition}

We will mainly consider the following two examples.

\begin{example}
  The symmetric monoidal dagger category $\cat{Rel}$ has sets as objects. Morphisms $A \to B$ are \emph{relations} $R \subseteq A \times B$, with composition $S \circ R = \{ (a,c) \mid \exists b \colon (a,b) \in R, (b,c) \in S \}$. The dagger is given by $R^\dag = \{(b,a) \mid (a,b \in R)\}$, and the monoidal structure is given by Cartesian products. We may think of $\cat{Rel}$ as modeling \emph{nondeterministic computation}~\cite{jacobsrutten:coalgebra}.
\end{example}


\begin{example}
  The symmetric monoidal dagger category $\cat{FHilb}$ has finite-dimensional complex \emph{Hilbert spaces} as objects and linear maps as morphisms. The dagger is given by adjoints: $f^\dag$ is the unique linear function satisfying $\inprod{f(x)}{y} = \inprod{x}{f^\dag(y)}$; in terms of matrices it is the conjugate transpose. The monoidal structure is given by tensor products of Hilbert spaces. This models \emph{quantum computation}~\cite{abramskycoecke:categoricalsemantics}.
\end{example}

There are many other examples. Reversible probabilistic computation is modelled by the category of doubly stochastic maps~\cite[2.3.5]{coeckepaquettepavlovic:structuralism}; this generalizes to labelled Markov chains~\cite{panangaden:markov}. Universal constructions can generate examples with specific properties~\cite{pavlovic:toymodels}. Finally, one can formally add daggers to a category in a free or cofree way~\cite[3.1.17 and~3.1.19]{heunen:thesis}. We will be interested in the following way to turn a monoidal dagger category into a new one of endofunctors on the old one. It could be regarded as modeling \emph{second-order computation}, because the computations in the new category may refer to computations in the old one (but not to themselves).

\begin{example}
  A functor $\cat{C}\sxto{F}\cat{D}$ between dagger categories is a \emph{dagger functor} when $F(f^\dag)=F(f)^\dag$ on morphisms.
  Let $\cat{C}$ be a monoidal dagger category.
  If $F(A) \sxto{\beta_A} G(A)$ is a natural transformation between dagger functors $\cat{C} \sxto{F,G} \cat{C}$, then so is $G(A) \sxto{\beta_A^\dag} F(A)$.
  Thus the category $[\cat{C},\cat{C}]_\dag$ of dagger functors $\cat{C}\to\cat{C}$ is again a monoidal dagger category
  by $G \otimes F = G \circ F$.
\end{example}

Monoidal dagger categories have a sound and complete \emph{graphical calculus}, that we briefly recall; for more details, see~\cite{selinger:graphicallanguages}. A morphism $A \sxto{f} B$ is represented as 
$\setlength\morphismheight{3mm}\begin{pic}
  \node[morphism,font=\tiny] (f) at (0,0) {$f$};
  \draw (f.south) to +(0,-.1);
  \draw (f.north) to +(0,.1);
\end{pic}$,
and composition, the tensor product, and the dagger, become:
\[
  \begin{pic}
    \node[morphism] (f) {$g \circ f$};
    \draw (f.south) to +(0,-.65) node[below] {$A$};
    \draw (f.north) to +(0,.65) node[above] {$C$};
  \end{pic}
  = 
  \begin{pic}
    \node[morphism] (g) at (0,.75) {$g\vphantom{f}$};
    \node[morphism] (f) at (0,0) {$f$};
    \draw (f.south) to +(0,-.3) node[below] {$A$};
    \draw (g.south) to node[right] {$B$} (f.north);
    \draw (g.north) to +(0,.3) node[above] {$C$};
  \end{pic}
  \qquad\qquad
  \begin{pic}
    \node[morphism] (f) {$f \otimes g$};
    \draw (f.south) to +(0,-.65) node[below] {$A \otimes C$};
    \draw (f.north) to +(0,.65) node[above] {$B \otimes D$};
  \end{pic}
  = 
  \begin{pic}
    \node[morphism] (f) at (-.4,0) {$f$};
    \node[morphism] (g) at (.4,0) {$g\vphantom{f}$};
    \draw (f.south) to +(0,-.65) node[below] {$A$};
    \draw (f.north) to +(0,.65) node[above] {$B$};
    \draw (g.south) to +(0,-.65) node[below] {$C$};
    \draw (g.north) to +(0,.65) node[above] {$D$};
  \end{pic}
  \qquad\qquad
  \begin{pic}
    \node[morphism] (f) {$f^\dag$};
    \draw (f.south) to +(0,-.65) node[below] {$B$};
    \draw (f.north) to +(0,.65) node[above] {$A$};
  \end{pic}
  =
  \begin{pic}
    \node[morphism,hflip] (f) {$f$};
    \draw (f.south) to +(0,-.65) node[below] {$B$};
    \draw (f.north) to +(0,.65) node[above] {$A$};
  \end{pic}
\]
Notice that the output wire $B \otimes D$ of a morphism $A \sxto{f} B \otimes D$ becomes a pair of wires labelled $B$ and $D$ coming out of the box labelled $f$. Also, the dagger reflects in the horizontal axis, which is why we draw the boxes asymmetrically. Distinguished morphisms are often depicted with special diagrams instead of generic boxes as above. For example, the identity $A \to A$ is just the line
$\begin{pic}
  \draw (0,-.3) to (0,0);
\end{pic}$\;; 
the (identity on) the monoidal unit object $I$ is drawn as the empty picture, and the swap map of symmetric monoidal categories becomes 
$\begin{pic}[scale=.25]
  \draw (0,0) to[out=80,in=-100] (1,1);
  \draw (1,0) to[out=100,in=-80] (0,1);
\end{pic}$.
Soundness and completeness means that any equality between morphisms one can prove algebraically using the axioms of monoidal dagger categories can equivalently and rigorously be proven graphically by isotopies of the graphical diagram.

To model \emph{higher order computation}, we need function types. This is usually done by requiring \emph{closed monoidal categories}, where the functors $- \otimes B$ have right adjoints $B \multimap -$. That is, there is a natural bijective correspondence between morphisms $B \otimes A \sxto{f} C$ and their \emph{curried} version $A \sxto{\Lambda(f)} (B \multimap C)$. In the reversible setting of monoidal dagger categories, this closure operation should cooperate with the dagger: since $B \multimap C$ is the type of computations $B \sxto{f} C$, and those computations can be reversed to $C \sxto{f^\dag} B$, there should be an operation $(B \multimap C) \to (C \multimap B)$ modelling this internally (we will see this in more detail in Section~\ref{sec:monoids}). Therefore we demand that $B \multimap -$ are dagger functors. It follows that they are not just right adjoint to $- \otimes B$, but also left adjoint. Now it is a small step to so-called \emph{compact dagger categories}~\cite{lindner:adjunctions,kellylaplaza:compactcategories}, which we make here for the sake of simplicity.

\begin{definition}
  A \emph{compact dagger category} is a symmetric monoidal dagger category in which every object $A$ has a chosen dual object $A^*$ and a morphism $I \sxto{u} A^* \otimes A$, drawn as 
  $\begin{pic}
    \draw (0,0) to[out=-90,in=-90,looseness=1.25] (.4,0);
  \end{pic}$, 
  satisfying $\big((u^\dag \circ \sigma) \otimes \id\big) \circ (\id \otimes u) = \id$ and its dual:
  \begin{equation}\label{eq:snake}
    \begin{pic}
      \draw (0,0) node[left] {$A$} to[out=90,in=-90] (.5,1);
      \draw (.5,1) to[out=90,in=90,looseness=1.25] (0,1);
      \draw (0,1) to[out=-90,in=90] node[above right=-.5mm] {$\;\;A^*$} (1,.5);
      \draw (1,.5) to[out=-90,in=-90,looseness=1.25] (1.5,.5);
      \draw (1.5,.5) to (1.5,2) node[left] {$A$};
    \end{pic}
    =
    \begin{pic}
      \draw (0,0) node[right] {\phantom{$A$}} to (0,2) node[right] {$A$};
    \end{pic}
    \qquad
    \begin{pic}
      \draw (0,0) node[left] {\phantom{$A^*$}} to (0,2) node[left] {$A^*$};
    \end{pic}
    =
    \begin{pic}
      \draw (1.5,0) node[right] {$A^*$} to[out=90,in=-90] (1,1);
      \draw (1.5,1) to[out=90,in=90,looseness=1.25] (1,1);
      \draw (1.5,1) to[out=-90,in=90] node[above left=-.5mm] {$A\;\;\;$} (.5,.5);
      \draw (0,.5) to[out=-90,in=-90,looseness=1.25] (.5,.5);
      \draw (0,.5) to (0,2) node[right] {$A^*$};
    \end{pic}
  \end{equation}
\end{definition}
\noindent
Compact dagger categories are automatically closed monoidal, with $(B \multimap C) = B^* \otimes C$. Think of dual objects $B^*$ as \emph{input} types, and primal objects $C$ as \emph{output} types. By convention we choose $A^{**}=A$ and $(A \otimes B)^* = B^* \otimes A^*$.

Our previous examples in fact already satisfy this closure property of \emph{higher order computation}: 
$\cat{Rel}$ and $\cat{FHilb}$ are compact dagger categories as follows. In $\cat{Rel}$ we can take $A^*=A$ and $u = \{(*,(a,a)) \mid a \in A\}$ for $I=\{*\}$. In $\cat{FHilb}$ we can take $H^*$ to be the dual Hilbert space of $H$; if $H$ has an orthonormal basis $\{e_1,\ldots,e_n\}$, then $H^*$ has an orthonormal basis $\{e_1^*,\ldots,e_n^*\}$, and we can take $u(1) = \sum_{i=1}^n e_i^* \otimes e_i$.
There is also a \emph{free compact dagger category} on a given (dagger) category $\cat{C}$~\cite{abramsky:free}.

Let us conclude this preparatory section by contrasting reversible computing and \emph{invertible computing}.
A \emph{groupoid} is a category where any morphism is invertible; it is always a dagger category with $f^\dag = f^{-1}$.
Any symmetric monoidal closed groupoid $\cat{G}$ is a so-called compact category with $A^* = (A \multimap I)$, as follows.
Closure gives isomorphisms $(A \multimap B) \otimes A \sxto{\mathrm{ev}} B$ for all objects $A$ and $B$; in particular, $I \cong A^* \otimes A$. 
The morphisms $\Lambda(\mathrm{ev})$ are isomorphisms $A \cong A^{**}$, making $\cat{G}$ into a so-called $*$-autonomous category~\cite{barrwells:computer}. 
Because $\cat{G}$ is symmetric monoidal, there are isomorphisms $A^* \otimes B^* \sxto{\Lambda(\mathrm{ev} \otimes \mathrm{ev})} (A \otimes B)^*$, making $\cat{G}$ a compact category.
However, this is not a compact dagger category unless all swap maps $\sigma$ are identities.

\section{Frobenius monoids}\label{sec:monoids}

This section considers monoids in monoidal dagger categories. We will see that, in the higher order setting of closed monoidal categories, our rule of thumb that everything should cooperate with the dagger means considering \emph{Frobenius monoids}.

\begin{definition}\label{def:monoid}
  A \emph{monoid} in a monoidal category is an object $A$ with morphisms $\tinymult \colon A \otimes A \to A$ and $\tinyunit \colon I \to A$, satisfying:
  \[
    \begin{pic}[scale=.4]
      \node[dot] (t) at (0,1) {};
      \node[dot] (b) at (1,0) {};
      \draw (t) to +(0,1);
      \draw (t) to[out=0,in=90] (b);
      \draw (t) to[out=180,in=90] (-1,0) to (-1,-1);
      \draw (b) to[out=180,in=90] (0,-1);
      \draw (b) to[out=0,in=90] (2,-1);
    \end{pic}
    =
    \begin{pic}[yscale=.4,xscale=-.4]
      \node[dot] (t) at (0,1) {};
      \node[dot] (b) at (1,0) {};
      \draw (t) to +(0,1);
      \draw (t) to[out=0,in=90] (b);
      \draw (t) to[out=180,in=90] (-1,0) to (-1,-1);
      \draw (b) to[out=180,in=90] (0,-1);
      \draw (b) to[out=0,in=90] (2,-1);
    \end{pic}
  \qquad
  \begin{pic}[scale=.4]
    \node[dot] (d) {};
    \draw (d) to +(0,1);
    \draw (d) to[out=0,in=90] +(1,-1) to +(0,-1);
    \draw (d) to[out=180,in=90] +(-1,-1) node[dot] {};
  \end{pic}
  =
  \begin{pic}[scale=.4]
    \draw (0,0) to (0,3);
  \end{pic}
  =
  \begin{pic}[yscale=.4,xscale=-.4]
    \node[dot] (d) {};
    \draw (d) to +(0,1);
    \draw (d) to[out=0,in=90] +(1,-1) to +(0,-1);
    \draw (d) to[out=180,in=90] +(-1,-1) node[dot] {};
  \end{pic}
  \]
  It is \emph{commutative} when $\tinymult = \tinymult \circ \sigma$.
  A \emph{Frobenius monoid} is a monoid in a monoidal dagger category satisfying~\eqref{eq:frobeniuslaw}.
  It is \emph{special} when $\tinymult \circ \big(\tinymult\big)^\dag = \id[A]$.
\end{definition}

A \emph{comonoid} in $\cat{C}$ is a monoid in $\cat{C}\op$. The Frobenius law~\eqref{eq:frobeniuslaw} makes sense for pairs of a monoid and comonoid on the same object, and most of Section~\ref{sec:monads} holds in that generality. 
Each side of the Frobenius law~\eqref{eq:frobeniuslaw} equals $(\tinymult)^\dag \circ \tinymult$; one of these equations is equivalent to~\eqref{eq:frobeniuslaw}.
It is mostly motivated by observing that Frobenius monoids in specific categories are appropriate well-known mathematical structures.

\begin{example}\label{ex:cstaralgebras}
  Frobenius monoids in $\cat{FHilb}$ correspond to finite-dimensional \emph{C*-algebras}~\cite[Theorem~4.6]{vicary:quantumalgebras}.
  These play a major role in quantum computing~\cite{keyl:quantuminformation}, but also as semantics for labelled Markov processes with bisimulations~\cite{misloveetal:markov,sahebdjahromi:cpos,kozen:probabilistic,moshierpetrisan:duality} and as operational semantics of probabilistic languages~\cite{dipierrohankinwiklicky:approximate,dipierrowiklicky:operator}. 
  Commutative Frobenius monoids in $\cat{FHilb}$ therefore correspond to orthonormal bases \todo{when special}~\cite{coeckepavlovicvicary:bases}.
\end{example}

\begin{example}\label{ex:groupoids}
  Frobenius monoids in $\cat{Rel}$ correspond to (small) \emph{groupoids}~\cite{heunencontrerascattaneo:groupoids,pavlovic:frobrel}, which are important to invertible computing.
\end{example}

\begin{example}
  In a compact dagger category, $A^* \otimes A$ is a Frobenius monoid with $\tinyunit = u$, and $\tinymult$ being the \emph{pair of pants}:
  \[
    \begin{pic}[scale=.75,font=\tiny]
      \draw (0,0) node[left] {$A^*$} to[out=90,in=-90] (1,2) node[left] {$A^*$};
      \draw (3,0) node[right] {$A$} to[out=90,in=-90] (2,2) node[right] {$A$};
      \draw (1,0) node[left] {$A$} to[out=90,in=-90] (2,.75);
      \draw (2,0) node[right] {$A^*$} to[out=90,in=-90] (1,.75);
      \draw (2,.75) to[out=90,in=90] (1,.75);
    \end{pic}
  \]
  This is precisely the monoid $A \multimap A$ of computations $A \to A$ under composition.
\end{example}

Pair of pants are universal, as the following generalization of Cayley's theorem shows.
A \emph{monoid homomorphism} $f$ satisfies $\tinyunit = f \circ \tinyunit$ and $f \circ \tinymult = \tinymult \circ (f \otimes f)$.

\begin{lemma}
  Any monoid $(A,\tinymult,\tinyunit)$ in a compact category allows a monic monoid homomorphism $R$ into $A^* \otimes A$.
\end{lemma}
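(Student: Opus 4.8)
The plan is to mimic the classical Cayley embedding: send an element of $A$ to the operation of left multiplication by it. Categorically, there is no ``element'', so instead I would build the homomorphism $R\colon A\to A^*\otimes A$ by currying the left-multiplication map. Concretely, the multiplication $\tinymult\colon A\otimes A\to A$ has a curried form $A\sxto{\Lambda(\tinymult)} (A\multimap A) = A^*\otimes A$ under the closed structure $(A\multimap A)=A^*\otimes A$ of the compact category; this $R=\Lambda(\tinymult)$ is the candidate. Graphically, $R$ is the morphism $A\to A^*\otimes A$ obtained by bending the first input leg of $\tinymult$ around with the unit $u$, i.e.
\[
  \begin{pic}[scale=.6,font=\tiny]
    \node[dot] (m) at (1,1) {};
    \draw (1,-1) node[below] {$A$} to (m);
    \draw (m) to[out=180,in=-90] (0.3,2) node[above] {$A^*$};
    \draw (m) to[out=0,in=-90] (1.7,2) node[above] {$A$};
  \end{pic}
  \quad\text{with the $A^*$ leg produced via}\quad u\colon I\to A^*\otimes A,
\]
so that uncurrying $R$ recovers $\tinymult$.

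\textbf{Key steps, in order.} First I would write down $R$ precisely as the composite using $u$, $\tinymult$, and the duality maps, and record the basic fact (immediate from compactness, i.e.\ the snake equations~\eqref{eq:snake}) that uncurrying $R$ gives back $\tinymult$; this is the bridge that converts all subsequent checks about $R$ into manipulations of $\tinymult$. Second, I would verify that $R$ is a monoid homomorphism: the unit condition $R\circ\tinyunit$ equals the unit $u$ of the pair-of-pants monoid, which unwinds to the left-unit law $\tinymult\circ(\tinyunit\otimes\id)=\id$; and the multiplicativity $R\circ\tinymult = (\text{pair of pants})\circ(R\otimes R)$, which after uncurrying becomes exactly the associativity of $\tinymult$. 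Both are short diagram chases in the graphical calculus. Third, I would show $R$ is monic: if $g,h\colon X\to A$ satisfy $R\circ g = R\circ h$, then uncurrying and precomposing with $\tinyunit\otimes\id$ (i.e.\ feeding the unit into the second multiplication slot) and using the unit law yields $g=h$; equivalently, $R$ has a retraction given by ``multiply by the unit'', namely the map $A^*\otimes A\to A$ that is $\id$ uncurried against $\tinyunit$, composed appropriately. Either phrasing gives that $R$ is split monic, hence monic.

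\textbf{Main obstacle.} None of the individual steps is deep; the only real care needed is bookkeeping with the duals, since the statement is for a general compact (not compact \emph{dagger}) category and the monoid is not assumed commutative, so I must not accidentally use the symmetry $\sigma$ to swap the two inputs of $\tinymult$ or to conflate $R$ with a \emph{right}-multiplication map. Thus the main obstacle is purely presentational: choosing the handedness conventions (which leg of $A^*\otimes A$ is the ``input'' wire, and on which side $\tinymult$ acts) consistently so that associativity of $\tinymult$ translates cleanly into multiplicativity of $R$, and so that the retraction is visibly well-defined. Once the conventions are fixed, the proof is a handful of applications of the snake equations~\eqref{eq:snake} together with the monoid axioms of Definition~\ref{def:monoid}.
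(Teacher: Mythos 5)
Your proposal is correct and matches the paper's proof essentially verbatim: the paper also defines $R$ by bending one input leg of $\tinymult$ around with the cup $u$ (equation~\eqref{eq:cayleyembedding}), checks it is a monoid homomorphism, and exhibits monicity via the retraction that evaluates the $A^*$ leg against the unit $\tinyunit$. Your attribution of the homomorphism property to associativity and unitality (rather than any Frobenius structure) is the right justification, and your care about handedness is exactly the only bookkeeping issue in the argument.
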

\begin{proof}
  The following is a monoid homomorphism by~\eqref{eq:frobeniuslaw}:
  \begin{equation}\label{eq:cayleyembedding}
    \begin{pic}
      \node[morphism] (R) {$R$};
      \draw (R.south) to +(0,-.3) node[right] {$A$};
      \draw (R.north east) to +(0,.3) node[right] {$A$};
      \draw (R.north west) to +(0,.3) node[left] {$A^*$};
    \end{pic}
    \quad = 
    \begin{pic}
      \node[dot] (d) at (0,0) {};
      \draw (d.north) to (0,.3) node[right] {$A$};
      \draw (d.east) to[out=0, in=90] (.3,-.25) to (.3,-.8) node[left] {$A$};
      \draw (d.west) to[out=180, in=90] (-.3,-.25);
      \draw (-.8,-.25) to[out=-90,in=-90,looseness=1.5] (-.3,-.25);
      \draw (-.8,-.25) to (-.8,.3) node[left] {$A^*$};
    \end{pic}
  \end{equation}
  It is monic because it has a left inverse $\big((\tinyunit)^\dag \otimes \id\big) \circ R$.
\end{proof}

We will prove that the Cayley embedding of the previous lemma respects daggers precisely when the monoid is a Frobenius monoid.
To make precise what it means to respect daggers, we need to internalize the operation $f \mapsto f^\dag$ from $A \sxto{f} A$ to the monoid $A \multimap A$. But the former might not be a well-defined morphism; for example, in $\cat{FHilb}$, taking conjugate transpose matrices is \emph{anti-}linear, not linear, and hence a morphism $(A \multimap A) \to (A \multimap A)^*$ rather than an endomorphism. In a compact category, this is modeled by
\[
  \begin{pic}
    \node[morphism] (f) {$f_*$};
    \draw (f.north) to +(0,.5) node[right] {$B^*$};
    \draw (f.south) to +(0,-.5) node[right] {$A^*$};
  \end{pic}
  \quad := \quad
  \quad = \quad
  \begin{pic}
    \node[morphism,hflip] (f) {$f$};
    \draw (f.north) to[out=90,in=-90] +(.75,.4) to[out=90,in=90,looseness=.75] +(-.75,0) to[out=-90,in=90] +(.75,-.4) to +(0,-.9) node[left] {$A^*$};
    \draw (f.south) to[out=-90,in=-90] +(-.75,0) to +(0,.9) node[right] {$B^*$};
  \end{pic}
\]
for $A \sxto{f} B$.
The operation $f \mapsto f^\dag$ additionally is contravariant: $(g \circ f)^\dag = f^\dag \circ g^\dag$. So for it to be a monoid homomorphism the codomain has to have opposite multiplication as the domain.

\begin{lemma}
  If $(A,\tinymult,\tinyunit)$ is a monoid in a compact category, then so is $(A^*,\tinymult_*,\tinyunit_*)$, called the \emph{opposite monoid}.
\end{lemma}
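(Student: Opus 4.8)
The plan is to exhibit explicit formulas for $\tinymult_*$ and $\tinyunit_*$ on $A^*$ and then verify the monoid axioms either by the graphical calculus of compact dagger categories or by deferring to the general principle that the operation $f \mapsto f_*$ is a (strict) monoidal functor. Concretely, I would define the multiplication on $A^*$ as the image $(\tinymult)_* \colon A^* \to A^* \otimes A^* = (A \otimes A)^*$ of the original multiplication under the operation $(-)_*$ just introduced, precomposed with the canonical identification $(A \otimes A)^* = A^* \otimes A^*$; and the unit as $(\tinyunit)_* \colon A^* \to I^* = I$. Graphically these are obtained from the $\tinymult$ and $\tinyunit$ diagrams by the same ``bend all wires around'' recipe used to define $f_*$.

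First I would record the key compatibility: for composable $f, g$ one has $(g \circ f)_* = g_* \circ f_*$ but, because bending wires reverses the tensor order, $(f \otimes g)_* = \sigma \circ (f_* \otimes g_*) \circ \sigma$ up to the chosen identifications $(A \otimes B)^* = B^* \otimes A^*$ — in other words $(-)_*$ is a monoidal functor from $\cat{C}$ to $\cat{C}$ that sends $\otimes$ to $\otimes$ with a swap built into the coherence. Granting this, associativity and unitality of $(A^*, \tinymult_*, \tinyunit_*)$ follow by applying $(-)_*$ to the associativity and unit diagrams of $(A, \tinymult, \tinyunit)$, since $(-)_*$ is a functor and the swap maps appearing in the coherence are precisely what is needed to turn one associativity/unit law into the other. (This is the same phenomenon as ``the opposite of a monoid is a monoid'': $(-)^{\mathrm{op}}$ reverses multiplication, and two applications of reversal cancel; here the reversal is implemented by $\sigma$.)

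Alternatively — and this is probably the cleanest route to write out — I would verify the three equations directly in the graphical calculus. Draw $\tinymult_*$ as the diagram for $\tinymult$ with its two input legs and one output leg all bent upward (or downward) using cups and caps, and $\tinyunit_*$ as $\tinyunit$ with its leg bent. Then the associativity equation for $\tinymult_*$ becomes, after sliding cups and caps along the wires (the snake equations~\eqref{eq:snake}), exactly the associativity equation for $\tinymult$ read in a mirror; and similarly for the two unit equations using the unit law for $\tinymult$. Since the graphical calculus for compact (dagger) categories is sound and complete, this suffices.

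The main obstacle is purely bookkeeping: keeping track of the chosen dualities $A^{**} = A$ and $(A \otimes B)^* = B^* \otimes A^*$ and the attendant swap maps, so that the left-hand unit law for $A^*$ comes from the left-hand (rather than right-hand) unit law for $A$ after the order-reversal, and likewise that associativity closes up correctly rather than producing a spurious $\sigma$. I expect no genuine mathematical difficulty beyond this; once the conventions are fixed, either the ``$(-)_*$ is monoidal'' argument or the direct diagram manipulation goes through mechanically.
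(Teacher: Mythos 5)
Your first route is exactly the paper's proof, which simply observes that the functor $f \mapsto f_*$ is (strong) monoidal and leaves the coherence bookkeeping implicit; your elaboration of the order-reversal in $(A\otimes B)^*=B^*\otimes A^*$ and the alternative direct graphical check are both sound. The only slip is in the stated types: since $f_*\colon A^*\to B^*$ is covariant, the structure maps should read $\tinymult_*\colon A^*\otimes A^*=(A\otimes A)^*\to A^*$ and $\tinyunit_*\colon I=I^*\to A^*$, not the reversed directions you wrote.
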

\begin{proof}
  The functor $f \mapsto f_*$ is (strong) monoidal.
\end{proof}

\begin{definition}
  A monoid $(A,\tinymult,\tinyunit)$ in a compact dagger category is an \emph{involutive monoid} when it is equipped with an \emph{involution}: a monoid homomorphism $A \sxto{i} A^*$ satisfying $i_* \circ i = \id$.
  A \emph{homomorphism of involutive monoids} is a monoid homomorphism $A \sxto{f} B$ satisfying $i \circ f = f_* \circ i$.
\end{definition}

Note that there is a canonical choice of involution:
\begin{equation}\label{eq:canonicalinvolution}
  \begin{pic}
    \node[dot] (d) at (0,0) {};
    \draw (d.north) to +(0,.2) node[dot] {};
    \draw (d.east) to[out=0,in=90] +(.3,-.4) to +(0,-.4) node[left] {$A$};
    \draw (d.west) to[out=180,in=90] +(-.3,-.4) to[out=-90,in=-90] +(-.7,0) to +(0,1) node[right] {$A^*$};
  \end{pic}
\end{equation}
For the groupoids of Example~\ref{ex:groupoids}, it is $g \mapsto g^{-1}$. For the C*-algebras of Example~\ref{ex:cstaralgebras}, it is $a \mapsto a^*$. The following theorem justifies the Frobenius law from first principles, generalizing~\cite[Theorem~4.3]{pavlovic:abstraction} noncommutatively.

\begin{theorem}
  A monoid in a compact dagger category is a Frobenius monoid if and only if~\eqref{eq:canonicalinvolution} makes it involutive and~\eqref{eq:cayleyembedding} a homomorphism of involutive monoids.
\end{theorem}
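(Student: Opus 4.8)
The plan is to prove the two directions separately, leaning on the graphical calculus for compact dagger categories throughout. Write $m=\tinymult$, $u=\tinyunit$, and let $c$ denote the canonical map~\eqref{eq:canonicalinvolution} and $R$ the Cayley embedding~\eqref{eq:cayleyembedding}.

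First I would unwind the three conditions into graphical statements. Being \emph{involutive} amounts to: $c$ is a monoid homomorphism $A\to A^*$ (where $A^*$ carries the opposite multiplication $m_*$ and unit $u_*$), and $c_*\circ c=\id$. Being a \emph{homomorphism of involutive monoids} for $R\colon A\to A^*\otimes A$ amounts to $c^{A^*\otimes A}\circ R = R_*\circ c$, where $c^{A^*\otimes A}$ is the canonical involution of the pair-of-pants monoid $A^*\otimes A$; it is a standard fact (and easy to check graphically, since the pair-of-pants comultiplication is drawn by bending the two legs of the pair of pants) that this canonical involution on $A^*\otimes A$ is the swap $\sigma\colon A^*\otimes A\to A\otimes A^*\cong (A^*\otimes A)^*$, so the condition reads $\sigma\circ R = R_*\circ c$.

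For the ``only if'' direction, assume the Frobenius law~\eqref{eq:frobeniuslaw}. The key consequence I would use is the snake-like identity that Frobenius monoids satisfy: $m$ together with $m^\dag$ makes $A$ into a self-dual object, so that $(u^\dag\circ m)$ and $(m^\dag\circ u)$ exhibit $A\cong A^*$. Concretely, $c$ is exactly this self-duality isomorphism composed with the ambient duality $I\xrightarrow{u} A^*\otimes A$, and the Frobenius law is precisely what lets one slide the multiplication dot past the cup/cap. So the verification that $c$ is a monoid homomorphism is a diagram chase in which each step is an application of~\eqref{eq:frobeniuslaw} (or associativity/unitality); that $c_*\circ c=\id$ follows from $m^{\dag\dag}=m$, i.e.\ from the dagger axiom together with one more use of Frobenius to cancel a zig-zag. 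For the statement about $R$, I would expand both $\sigma\circ R$ and $R_*\circ c$ into diagrams built from one multiplication dot, one $m^\dag$ (hidden inside $c$), and the duality maps; again the Frobenius law turns one picture into the other. I expect this to be the main obstacle: it is several linked graphical identities, and one has to be careful that the convention $(A\otimes B)^*=B^*\otimes A^*$ and the choice of pair-of-pants multiplication are tracked correctly, since a wrong convention silently inserts a swap.

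For the ``if'' direction, assume $c$ is an involution and $R$ is a homomorphism of involutive monoids, and derive~\eqref{eq:frobeniuslaw}. Here the strategy is to exploit that $R$ is \emph{monic} (Lemma, with explicit left inverse $((u^\dag)\otimes\id)\circ R$ --- wait, the paper writes $((\tinyunit)^\dag\otimes\id)\circ R$, i.e.\ postcomposition with $(u^\dag\otimes\id)$). Since $R$ reflects equalities, it suffices to prove that the two sides of the Frobenius law become equal after applying $R$ to them, or more precisely to exhibit the Frobenius law as a consequence of the hypothesis $\sigma\circ R = R_*\circ c$ pushed through the known pair-of-pants structure on $A^*\otimes A$ (which \emph{is} Frobenius, by the Example). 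Expanding $\sigma\circ R=R_*\circ c$ graphically and cancelling the duality cups/caps (legitimate because the snake equations~\eqref{eq:snake} hold) should yield an identity among morphisms $A\otimes A\to A\otimes A$ of the form ``$(m^\dag\circ m)$ drawn one way equals $(m^\dag\circ m)$ drawn the other way'', which is exactly~\eqref{eq:frobeniuslaw} (using the remark that each side of~\eqref{eq:frobeniuslaw} equals $m^\dag\circ m$ and that one such equation is equivalent to the law). The subtlety to watch is that one must use \emph{both} hypotheses: the involutivity of $c$ is needed to identify the relevant $m^\dag$ appearing after cancellation, and the homomorphism property of $R$ supplies the actual equation; I would organize the computation so that it is transparent which hypothesis is used where, and close by invoking monicity of $R$ only if the cancellation of duality maps is not by itself enough.
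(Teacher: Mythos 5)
Your overall strategy coincides with the paper's: the ``only if'' direction by direct graphical computation from~\eqref{eq:frobeniuslaw}, and the ``if'' direction by expanding the hypothesis that $R$ respects the involutions and cancelling duality maps. Two points need repair, though. First, under the paper's convention $(A\otimes B)^*=B^*\otimes A^*$ one has $(A^*\otimes A)^*=A^*\otimes A^{**}=A^*\otimes A$, and the canonical involution~\eqref{eq:canonicalinvolution} of the pair of pants is the \emph{identity}, not the swap; the homomorphism condition therefore reads $R_*\circ c=R$ rather than $\sigma\circ R=R_*\circ c$. Your formula presupposes the identification $(X\otimes Y)^*=X^*\otimes Y^*$; carried through consistently it would be equivalent, but as written it conflicts with the convention governing $R_*$ elsewhere in your argument, and is precisely the ``silently inserted swap'' you warn against.

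Second, and more substantively: in the converse direction, expanding $R_*\circ c=R$ and cancelling cups and caps does \emph{not} directly yield~\eqref{eq:frobeniuslaw}. What it yields, after bending the $A^*$ leg of $R$ back down using~\eqref{eq:snake}, is an auxiliary identity expressing the multiplication as a composite of two multiplication dots and the counit $(\tinyunit)^\dag$ --- in effect, the statement that $\tinymult^\dag$ is the comultiplication induced by the form $(\tinyunit)^\dag\circ\tinymult$. Reaching the Frobenius law from there still requires associativity: the paper applies this auxiliary identity, then associativity of $\tinymult$, then the identity again, to conclude that one side of~\eqref{eq:frobeniuslaw} equals $\tinymult^\dag\circ\tinymult$, which is equivalent to the law. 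Your fallback via monicity of $R$ does not obviously close this gap either: the two sides of~\eqref{eq:frobeniuslaw} have codomain $A\otimes A$, so monicity of $R$ does not apply to them directly; you would have to pass to $\id\otimes R$ (and justify that it is still monic), after which you face the same associativity manipulation anyway. The paper never invokes monicity in this proof, and in fact its converse uses only the single hypothesis $R_*\circ c=R$. Once you state the intermediate identity explicitly and insert the associativity step, your outline matches the paper's proof.
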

\begin{proof}
  Write $(A,\tinymult,\tinyunit)$ for the monoid, and $i$ for~\eqref{eq:canonicalinvolution}.
  If $A$ is a Frobenius monoid, it follows from~\eqref{eq:frobeniuslaw} that $i$ is indeed an involution.
  Observe that the involution on $A^* \otimes A$ is the identity because of our convention $(A \otimes B)^* = B^* \otimes A^*$.
  So~\eqref{eq:cayleyembedding} preserves involutions when $R_* \circ i = R$:
  \[
\todo{
    \begin{pic}
      \node[morphism] (i) at (0,.1) {$i$};
      \node[morphism] (R) at (0,.9) {$R_*$};
      \draw (R.south) to node[left] {$A^*$} (i.north);
      \draw (i.south) to +(0,-.3) node[left] {$A$};
      \draw ([xshift=1mm]R.north east) to +(0,.3) node[right] {$A$};
      \draw ([xshift=-1mm]R.north west) to +(0,.3) node[left] {$A^*$};
    \end{pic}
}
    = \hspace*{-2mm}
    \begin{pic}
      \node[dot] (l) at (0,0) {};
      \node[dot] (r) at (.75,.6) {};
      \draw (r.north) to +(0,.25) node[dot] {};
      \draw (r.east) to[out=0,in=90] +(.3,-.3) to +(0,-.75) node[left] {$A$};
      \draw (r.west) to[out=180,in=0] (l.east);
      \draw (l.west) to[out=180,in=-90] +(-.2,.3) to +(0,1.075) node[right] {$A$};
      \draw (l.south) to[out=-90,in=-90,looseness=1.5] +(-.6,0) to +(0,1.475) node[left] {$A^*$};
    \end{pic}
    \;\; = \;\;
    \begin{pic}
      \node[dot] (d) at (0,0) {};
      \draw (d.north) to (0,.8) node[right] {$A$};
      \draw (d.east) to[out=0, in=90] (.3,-.25) to (.3,-1) node[left] {$A$};
      \draw (d.west) to[out=180, in=90] (-.3,-.25);
      \draw (-.8,-.25) to[out=-90,in=-90,looseness=1.5] (-.3,-.25);
      \draw (-.8,-.25) to (-.8,.8) node[right] {$A^*$};
    \end{pic}
    \; = \hspace*{-2mm}
    \begin{pic}
      \node[morphism] (R) at (0,.5) {$R$};
      \draw ([xshift=1mm]R.north east) to +(0,.7) node[right] {$A$};
      \draw ([xshift=-1mm]R.north west) to +(0,.7) node[left] {$A^*$};
      \draw (R.south) to +(0,-.7) node[right] {$A$};
    \end{pic}
  \]

  Conversely, assuming $R_* \circ i = R$:
  \begin{equation}\tag{$*$}
    \begin{pic}
      \node[dot] (d) {};
      \draw (d.north) to +(0,.6);
      \draw (d.east) to[out=0,in=90] +(.3,-.3) to +(0,-.7);
      \draw (d.west) to[out=180,in=90] +(-.3,-.3) to +(0,-.7);
    \end{pic}
    \;=\;
    \begin{pic}
      \node[morphism] (R) at (0,0) {$R$};
      \draw (R.south) to +(0,-.6);
      \draw ([xshift=1mm]R.north east) to +(0,.7);
      \draw ([xshift=-1mm]R.north west) 
        to[out=90,in=-90] +(-.5,.4)
        to[out=90,in=90,looseness=.8] +(.5,0)
        to[out=-90,in=90] +(-.5,-.4)
        to +(0,-1);
    \end{pic}
    \;=\;
\todo{
    \begin{pic}
      \node[morphism] (R) at (0,.5) {$R_*$};
      \node[morphism] (i) at (0,0) {$i$};
      \draw (R.south) to (i.north);
      \draw (i.south) to +(0,-.2);
      \draw ([xshift=1mm]R.north east) to +(0,.6);
      \draw ([xshift=-1mm]R.north west) 
        to[out=90,in=-90] +(-.5,.4)
        to[out=90,in=90,looseness=.8] +(.5,0)
        to[out=-90,in=90] +(-.5,-.4)
        to +(0,-1.1);
    \end{pic}
}
    \;=\;
    \begin{pic}
      \node[dot] (l) at (-.4,-.3) {};
      \node[dot] (r) at (.4,.3) {};
      \draw (l.east) to[out=0,in=180] (r.west);
      \draw (r.north) to +(0,.2) node[dot] {};
      \draw (r.east) to[out=0,in=90] +(.3,-.3) to +(0,-.9);
      \draw (l.south) to +(0,-.5);
      \draw (l.west) to[out=180,in=-90] +(-.3,.3) to +(0,.8);
    \end{pic}
  \end{equation}
  Hence, by associativity:
  \[
    \begin{pic}
      \node[dot] (l) at (-.4,-.3) {};
      \node[dot] (r) at (.4,.3) {};
      \draw (l.east) to[out=0,in=180] (r.west);
      \draw (r.north) to +(0,.4);
      \draw (r.east) to[out=0,in=90] +(.3,-.3) to +(0,-.9);
      \draw (l.south) to +(0,-.5);
      \draw (l.west) to[out=180,in=-90] +(-.3,.3) to +(0,.8);
    \end{pic}
    \;\stackrel{(*)}{=}\;
    \begin{pic}
      \node[dot] (b) at (-.7,-.6) {};
      \node[dot] (l) at (-.3,-.2) {};
      \node[dot] (r) at (.4,.3) {};
      \draw (l.east) to[out=0,in=180] (r.west);
      \draw (r.north) to +(0,.2) node[dot] {};
      \draw (r.east) to[out=0,in=90] +(.3,-.3) to +(0,-.9);
      \draw (l.west) to[out=180,in=-90] +(-.3,.3) to +(0,.7);
      \draw (l.south) to[out=-90,in=0] (b.east);
      \draw (b.south) to +(0,-.2);
      \draw (b.west) to[out=180,in=-90] +(-.3,.3) to +(0,1.1);
    \end{pic}
    \;=
    \begin{pic}
      \node[dot] (t) at (-.7,.2) {};
      \node[dot] (l) at (-.4,-.3) {};
      \node[dot] (r) at (.4,.3) {};
      \draw (l.east) to[out=0,in=180] (r.west);
      \draw (r.north) to +(0,.2) node[dot] {};
      \draw (r.east) to[out=0,in=90] +(.3,-.3) to +(0,-.9);
      \draw (l.south) to +(0,-.5);
      \draw (l.west) to[out=180,in=-90] (t.south);
      \draw (t.east) to[out=0,in=-90] +(.3,.3) to +(0,.3);
      \draw (t.west) to[out=180,in=-90] +(-.3,.3) to +(0,.3);
    \end{pic}
    \;\stackrel{(*)}{=}\;
    \begin{pic}
      \node[dot] (t) at (0,.5) {};
      \node[dot] (b) at (0,0) {};
      \draw (t.south) to (b.north);
      \draw (t.east) to[out=0,in=-90] +(.3,.3) to +(0,.3);
      \draw (t.west) to[out=180,in=-90] +(-.3,.3) to +(0,.3);
      \draw (b.east) to[out=0,in=90] +(.3,-.3) to +(0,-.3);
      \draw (b.west) to[out=180,in=90] +(-.3,-.3) to +(0,-.3);
    \end{pic}
  \]
  But this is equivalent to~\eqref{eq:frobeniuslaw}.
\end{proof}


\section{Frobenius monads}\label{sec:monads}

A monad is a functor $\cat{C} \sxto{T} \cat{C}$ with natural transformations $T(T(A)) \sxto{\mu_A} T(A)$ and $A \sxto{\eta_A} T(A)$ satisfying certain laws. It is well-known that monads are precisely monoids in categories of functors $\cat{C} \to \cat{C}$: Definition~\ref{def:monoid} unfolds to the monad laws
\begin{align*}
  \mu_A \circ T(\mu_A) & = \mu_A \circ \mu_{T(A)}, \\
  \mu_A \circ T(\eta_A) = \;& \id[T(A)] = \mu_A \circ \eta_{T(A)}.
\end{align*}
There is a dual notion of a comonad. Daggers make any monoid (monad) give rise to a comonoid (comonad). Thus the Frobenius law~\eqref{eq:frobeniuslaw} lifts to monads as follows. 

\begin{definition}
  A \emph{Frobenius monad} on a dagger category $\cat{C}$ is a Frobenius monoid in $[\cat{C},\cat{C}]_\dag$;
  explicitly, a monad $(T,\mu,\eta)$ on $\cat{C}$ with $T(f^\dag)=T(f)^\dag$ and 
  \[
    T(\mu_A) \circ \mu^\dag_{T(A)} = \mu_{T(A)} \circ T(\mu_A^\dag).
  \]
  It is \emph{special} when $\mu_A \circ \mu^\dag_A = \id[T(A)]$.
\end{definition}

Frobenius monads have been studied before by Street~\cite{street:ambidextrous,lauda:ambidextrous}. His definition does not take daggers into account, and concerns a monad rather than a monad-comonad pair. 
However, the natural generalization of the above definition to (non-dagger) monad-comonad pairs results in an equivalent notion to the one studied by Street.
The primary example of a Frobenius monad is taking tensor products with a Frobenius monad.

\begin{example}\label{example:tensor}
   If $(B,\tinymult,\tinyunit)$ is a Frobenius monoid in a monoidal dagger category $\cat{C}$, then the functor $\cat{C} \sxto{- \otimes B} \cat{C}$, given by $A \mapsto A \otimes B$ and $f \mapsto f \otimes \id$, is a Frobenius monad on $\cat{C}$ with:
   \[
     \mu_A = \begin{pic}
       \node[dot] (d) at (0,0) {};
       \draw (d.north) to (0,.5) node[right] {$B$};
       \draw (d.east) to[out=0,in=90] +(.3,-.3) to +(0,-.2) node[right] {$B$};
       \draw (d.west) to[out=180,in=90] +(-.3,-.3) to +(0,-.2) node[right] {$B$};
       \draw (-.8,-.5) node[left] {$A$} to (-.8,.5) node[left] {$A$};
     \end{pic} 
     \qquad
     \eta_A = \begin{pic}
       \node[dot] (d) at (0,0) {};
       \draw (d.north) to (0,.5) node[right] {$B$};
       \draw (-.4,-.5) node[left] {$A$} to (-.4,.5) node[left] {$A$};
     \end{pic} 
   \]
\end{example}
\begin{proof}
  The Frobenius monad law simply comes down to the Frobenius monoid law:
  \[
    T\mu \circ \mu^\dag_T
    =
    \begin{pic}[scale=.75]
      \draw (0,0) node[below] {$B$} to (0,1) to[out=90,in=180] (.5,1.5) to (.5,2) node[above] {$B$};
      \draw (.5,1.5) to[out=0,in=90] (1,1) to[out=-90,in=180] (1.5,.5) to (1.5,0) node[below] {$B$};
      \draw (1.5,.5) to[out=0,in=-90] (2,1) to (2,2) node[above] {$B$};
      \node[dot] at (.5,1.5) {};
      \node[dot] at (1.5,.5) {};
      \draw (-.5,0) node[below] {$A$} to (-.5,2) node[above] {$A$};
    \end{pic}
    =
    \begin{pic}[yscale=.75,xscale=-.75]
      \draw (0,0) node[below] {$B$} to (0,1) to[out=90,in=180] (.5,1.5) to (.5,2) node[above] {$B$};
      \draw (.5,1.5) to[out=0,in=90] (1,1) to[out=-90,in=180] (1.5,.5) to (1.5,0) node[below] {$B$};
      \draw (1.5,.5) to[out=0,in=-90] (2,1) to (2,2) node[above] {$B$};
      \node[dot] at (.5,1.5) {};
      \node[dot] at (1.5,.5) {};
      \draw (2.5,0) node[below] {$A$} to (2.5,2) node[above] {$A$};
    \end{pic}
    =
    \mu_T \circ T\mu^\dag
  \]
  The monad laws become the monoid laws. Taking $A=I$, we thus see that $- \otimes B$ is a Frobenius monad if and only if $B$ is a Frobenius monoid.
\end{proof}

This section characterizes Frobenius monads of this form. There are, however, also other Frobenius monads, as in the following example.

\begin{example}
  Consider the monoid $\cat{Rel}(\mathbb{N},\mathbb{N})$ of all relations $\mathbb{N}\to\mathbb{N}$ as a single-object category. The following define a Frobenius monad on this category: 
  \begin{align*}
    T(R) = & \{(2m,2n)\mid (m,n)\in R\} \\
         & \cup\{(2m+1,2n+1)\mid (m,n)\in R\} \\ 
    \eta =& \{(2n,2n+1) \mid n \in \mathbb{N}\}  \\
    \mu =& \{(4n,2n) \mid n \in \mathbb{N}\} 
         \cup \{ (4n+3,2n+1) \mid n \in \mathbb{N}\} 
  \end{align*}
\end{example}  

The functor $-\otimes B$ comes with a natural transformation $\alpha_{-,-,B}$, making it a strong functor. This natural transformation respects the monoid structure on $B$. Before recording some folklore results, we first define what this means for monads.

\begin{definition}
  A functor $F$ between monoidal categories is \emph{strong} when it is equipped with a natural transformation
  $A \otimes F(B) \sxto{\st_{A,B}} F(A\otimes B)$ 
  satisfying $\st \circ \alpha = F(\alpha) \circ \st \circ (\id \otimes \st)$ and $F(\lambda) \circ \st = \lambda$.
  A \emph{morphism of strong functors} is a natural transformation $F\sxto{\beta} G$
  satisfying $\beta \circ \st = \st \circ (\id \otimes \beta)$
  A \emph{strong monad} is a monad $(T,\mu,\eta)$ that is a strong functor
  satisfying $\st \circ (\id \otimes \mu) = \mu \circ T(\st) \circ \st$ and $\st \circ (\id \otimes \eta) = \eta$.
  A \emph{morphism of strong monads} is a natural transformation, which is a morphism of the underlying monads and the underlying strong functors.
\end{definition} 
 

\begin{proposition}\label{prop:monoids}
  Let $\cat{C}$ be a monoidal category. The operations $B \mapsto - \otimes B$ and $T\mapsto T(I)$ define an adjunction between monoids in $\cat{C}$ and strong monads on $\cat{C}$, with $B \mapsto -\otimes B$ being the left adjoint.
\end{proposition}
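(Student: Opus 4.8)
The plan is to establish the two functors, check they are well-defined, and then exhibit the unit and counit of the adjunction together with the triangle identities; since the claim is an adjunction rather than an equivalence we do \emph{not} need to show the unit and counit are isomorphisms.

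First I would make the two assignments functorial. For $B \mapsto - \otimes B$, Example~\ref{example:tensor} (in its non-dagger reading) already tells us that $- \otimes B$ is a monad; the strength is $\st_{A,B'} = \alpha_{A,B',B}$ up to the convention on bracketing, and the strong-monad axioms $\st \circ (\id \otimes \mu) = \mu \circ T(\st) \circ \st$ and $\st \circ (\id \otimes \eta) = \eta$ reduce to the pentagon/triangle coherence of $\alpha$ together with the monoid axioms of $B$ — a routine diagram chase. A monoid homomorphism $f \colon B \to B'$ induces the natural transformation $\id_A \otimes f \colon A \otimes B \to A \otimes B'$, and the homomorphism equations say exactly that this is a morphism of strong monads. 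For the other direction, given a strong monad $(T,\mu,\eta,\st)$ one sets $T(I)$ to be the object, with multiplication
\[
  T(I)\otimes T(I) \xrightarrow{\ \st_{T(I),I}\ } T(T(I)\otimes I) \xrightarrow{\ T(\rho)\ } T(T(I)) \xrightarrow{\ \mu_I\ } T(I)
\]
and unit $\eta_I \circ \rho^{-1}_I \colon I \to T(I)$ (using $I \otimes I \cong I$); associativity and unitality of this multiplication follow from the strong-monad axioms and naturality, and a morphism of strong monads $\beta$ restricts to $\beta_I$, which the compatibility $\beta \circ \st = \st \circ (\id \otimes \beta)$ makes into a monoid homomorphism.

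Next I would produce the unit and counit. The composite $T \mapsto T(I) \mapsto - \otimes T(I)$ comes with a natural transformation $A \otimes T(I) \xrightarrow{\st_{A,I}} T(A \otimes I) \xrightarrow{T(\rho_A)} T(A)$; checking it is a morphism of strong monads $(-\otimes T(I)) \Rightarrow T$ is again a diagram chase using the strong-monad axioms, and this furnishes the counit $\varepsilon_T$. The unit $\eta_B \colon B \to (- \otimes B)(I) = I \otimes B \cong B$ in the other composite is (up to the coherence iso) the identity — indeed $(-\otimes B)(I) = I\otimes B$, and one checks directly that $\lambda_B \colon I \otimes B \to B$ is a monoid isomorphism, so the unit of the adjunction is invertible (which is expected, since $B \mapsto -\otimes B$ is in fact fully faithful). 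Finally I would verify the two triangle identities $\varepsilon_{-\otimes B} \circ (\,\text{unit at } B \otimes\, \id) = \id_{-\otimes B}$ and $T(\text{counit})_I \circ \text{unit}_{T(I)} = \id_{T(I)}$; both unwind to coherence equations for $\alpha,\lambda,\rho$ plus the monad unit law $\mu_I \circ \eta_{T(I)} = \id$.

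\textbf{Main obstacle.} None of the individual verifications is deep, but the bookkeeping is delicate: the strength $\st$, the associator $\alpha$, and the unitors $\lambda,\rho$ all have to be threaded through consistently, and it is easy to get a coherence isomorphism backwards or to conflate $I \otimes B$ with $B$ prematurely. The step I expect to be most error-prone is checking that the candidate counit $A \otimes T(I) \to T(A)$ is genuinely a morphism of strong monads — i.e.\ that it commutes with both the multiplications and the strengths — since this is where the strong-monad axiom $\st \circ (\id \otimes \mu) = \mu \circ T(\st) \circ \st$ does its real work and where naturality of $\st$ in both variables must be invoked carefully. A clean way to control all of this is to note that strong monads on $\cat{C}$ are the same as monoids in the monoidal category of strong endofunctors, and that $- \otimes (-) \colon \cat{C} \to [\cat{C},\cat{C}]$ is strong monoidal onto the strong endofunctors with a right adjoint given by evaluation at $I$; the asserted adjunction is then the induced adjunction on monoid objects, and the triangle identities are inherited. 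I would likely present the hands-on diagram chases but remark on this more structural viewpoint.
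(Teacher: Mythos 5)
Your proposal is correct and matches the paper's treatment: the paper simply cites Wolff for this folklore result and records exactly the unit $\lambda_B$ (up to direction) and counit $T(\rho)\circ\st$ that you construct, so your writeup just fills in the verifications the paper delegates to the reference. (One trivial slip: the unit of the monoid $T(I)$ is just $\eta_I\colon I\to T(I)$; the composite $\eta_I\circ\rho_I^{-1}$ as written does not typecheck.)
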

\begin{proof}
  See~\cite{wolff:monads}.  
  The unit of the adjunction is $I \otimes B \sxto{\lambda_B} B$.
  The counit is determined by $A \otimes T(I) \sxto{T(\rho) \circ\, \st} T(A)$.
\end{proof}

In the case of symmetric monoidal categories, there is also a notion of commutativity for strong monads~\cite{kock:strong,jacobs:weakening}. 
Given a strong monad $T$, one can define a natural transformation $T(A) \otimes B \sxto{\st'_{A,B}} T(A \otimes B)$ by $T(\sigma_{B,A})\circ st_{B,A}\circ \sigma_{T(A),B}$, and
\begin{align*} 
  dst_{A,B} & := \mu_{A\otimes B}\circ T(st'_{A,B})\circ st_{T(A),B} \\ 
  dst_{A,B}' & :=\mu_{A\otimes B}\circ T(st_{A,B})\circ st'_{A,T(B)}
\end{align*} 
A strong monad is \emph{commutative} when these coincide.
Proposition~\ref{prop:monoids} restricts to an adjunction between commutative monoids and commutative monads~\cite{wolff:monads}.



\begin{definition}
  A \emph{costrong functor} $\cat{C} \sxto{F} \cat{D}$ between monoidal categories is a functor that is strong when considered as a functor $\cat{C}\op \to \cat{D}\op$. Explicitly, it has a natural transformation $F(A \otimes B) \sxto{\cst_{A,B}} A \otimes F(B)$ satisfying
   $F(\lambda) =\lambda \circ \cst_{I,A}$ and 
   $\cst_{A\otimes B,C}\circ F(\alpha) =\alpha \circ (\id \otimes \cst_{B,C})\circ \cst_{A,B\otimes C}$.
  A \emph{morphism of costrong functors} is a natural transformation $F\sxto{\beta} G$ satisfying
  $
    \cst \circ \beta = (\id \otimes \beta) \circ \cst
  $.
  A \emph{costrong comonad} is a comonad $(T,\delta,\varepsilon)$ that is a costrong functor, such that
    $(\id \otimes \varepsilon) \circ \cst = \varepsilon$ and
    $\cst \circ T(\cst) \circ \delta = (\id \otimes \delta) \circ \cst$.
  A \emph{morphism of costrong comonads} is a natural transformation, which is a morphism of the underlying comonads and the underlying costrong functors.
\end{definition}


\begin{corollary}\label{cor:comonoids}
  Let $\cat{C}$ be a monoidal category. The operations $B \mapsto -\otimes B$ and $T\mapsto T(I)$ form an adjunction between comonoids in $\cat{C}$ and costrong comonads on $\cat{C}$, but this time $B \mapsto -\otimes B$ is the right adjoint. 
  \qed
\end{corollary}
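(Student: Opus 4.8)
The plan is to deduce this from Proposition~\ref{prop:monoids} by passing to the opposite category. First I would record that if $\cat{C}$ is monoidal then so is $\cat{C}\op$, with the same tensor product and unit object and with coherence isomorphisms the (reversed) inverses of those of $\cat{C}$. Under this, the very definitions in play dualize on the nose: a comonoid in $\cat{C}$ is exactly a monoid in $\cat{C}\op$, a costrong functor $\cat{C}\to\cat{C}$ is by definition a strong functor $\cat{C}\op\to\cat{C}\op$, and a costrong comonad on $\cat{C}$ is exactly a strong monad on $\cat{C}\op$. On morphisms these identifications reverse direction, so that the category of comonoids in $\cat{C}$ is $\big(\text{monoids in }\cat{C}\op\big)\op$ and the category of costrong comonads on $\cat{C}$ is $\big(\text{strong monads on }\cat{C}\op\big)\op$.

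Next I would apply Proposition~\ref{prop:monoids} verbatim to $\cat{C}\op$, obtaining an adjunction between monoids in $\cat{C}\op$ and strong monads on $\cat{C}\op$ whose left adjoint is $B\mapsto -\otimes B$ and whose right adjoint is $T\mapsto T(I)$. Here one checks the trivial compatibilities: the endofunctor $-\otimes B$ of $\cat{C}\op$ is literally the opposite of the endofunctor $-\otimes B$ of $\cat{C}$ (the same action on objects and arrows), its canonical strong structure as a functor on $\cat{C}\op$ is the associator, which is precisely the costrong structure $\cst$ of $-\otimes B$ on $\cat{C}$, and the assignment $T\mapsto T(I)$ is unchanged, with $T(I)$ a comonoid in $\cat{C}$ precisely because it is a monoid in $\cat{C}\op$.

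Finally I would invoke the elementary fact that an adjunction $F\dashv G$ induces an adjunction $G\op\dashv F\op$ on opposite categories. Feeding in the adjunction from the previous step and translating back via the two identifications of the first step yields an adjunction between comonoids in $\cat{C}$ and costrong comonads on $\cat{C}$ in which $B\mapsto -\otimes B$ has become the right adjoint and $T\mapsto T(I)$ the left adjoint, which is the assertion; the unit and counit are obtained by dualizing the maps $\lambda_B$ and $T(\rho)\circ\st$ of Proposition~\ref{prop:monoids}. I expect the only thing needing care --- the reason this is a corollary rather than a one-liner --- is the bookkeeping in the first step: checking that ``strong structure on $\cat{C}\op$'' and ``costrong structure on $\cat{C}$'' literally name the same natural transformations, so that the category identifications hold strictly and not merely up to equivalence, and keeping straight which variance reversal swaps the left adjoint for a right adjoint. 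No genuinely new argument beyond Proposition~\ref{prop:monoids} and the self-duality of ``costrong'' is required.
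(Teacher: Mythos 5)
Your argument is correct and is exactly the one the paper intends: the corollary is stated with only a \qed because it follows from Proposition~\ref{prop:monoids} by passing to $\cat{C}\op$, using that costrong functors/comonads are by definition strong functors/monads on the opposite category and that an adjunction $F\dashv G$ dualizes to $G\op\dashv F\op$, which swaps the handedness of the adjoints. Your care about the strict identification of the (co)strong structures and the reversal of morphism directions is the only bookkeeping involved, and you have it right.
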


In our reversible setting of dagger categories, any strong monad $T$ is automatically a costrong comonad under
$\cst = \st^\dag$, $\delta = \mu^\dag$, and $\varepsilon = \eta^\dag$. According to our motto that everything in sight should cooperate with the dagger, the reverse $\cst$ of $\st$ should in fact be its inverse, leading to the following definition.

\begin{definition}
  A \emph{strong Frobenius monad} on a monoidal dagger category $\cat{C}$ is a Frobenius monad $(T,\mu,\eta)$ that is simultaneously a strong monad, such that each $\st_A$ is unitary.
  A \emph{morphism of strong Frobenius monads} is just a morphism of the underlying strong monads.
\end{definition}

The following theorem promotes the adjunction of Proposition~\ref{prop:monoids} and Corollary~\ref{cor:comonoids} into an equivalence in the dagger setting. It generalizes \cite[Theorem~4.5]{pavlovic:abstraction} noncommutatively. It also generalizes the classic Eilenberg--Watts theorem, that characterizes certain endofunctors on abelian categories as being of the form $- \otimes B$ for a monoid $B$, to monoidal dagger categories; note that there are monoidal dagger categories that are not abelian, such as $\cat{Rel}$ and $\cat{Hilb}$~\cite[Appendix~A]{heunen:embedding}.

\begin{theorem}\label{thm:strong}
  Let $\cat{C}$ be a monoidal dagger category. The operations $B \mapsto -\otimes B$ and $T\mapsto T(I)$ define an equivalence between Frobenius monoids in $\cat{C}$ and strong Frobenius monads on $\cat{C}$. 
\end{theorem}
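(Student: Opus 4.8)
The plan is to upgrade the adjunction of Proposition~\ref{prop:monoids} and Corollary~\ref{cor:comonoids} to an equivalence by checking that, in the dagger setting with the extra Frobenius and unitarity conditions, the unit and counit of the adjunction are isomorphisms. First I would verify that the two functors in question are well-defined on the indicated subcategories: Example~\ref{example:tensor} already shows that $-\otimes B$ is a Frobenius monad when $B$ is a Frobenius monoid, and one checks directly that the canonical strength $\st_{A,B}=\alpha_{A,-,B}$ is unitary because $\alpha$ is unitary in a monoidal dagger category, so $-\otimes B$ lands among strong Frobenius monads. In the other direction, given a strong Frobenius monad $T$, one must check that $T(I)$ with the multiplication and unit obtained by restricting $\mu$ and $\eta$ along $\lambda$ (as in the proof of Proposition~\ref{prop:monoids}) is a Frobenius monoid; the monoid structure is folklore, the Frobenius law for $T(I)$ should follow from the Frobenius law for $T$ evaluated at $I$ together with the compatibility $T(f^\dag)=T(f)^\dag$, and the fact that $\st_I$ is (essentially) $\lambda$ and hence unitary is what makes the restriction behave well under the dagger.

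The heart of the argument is the counit. By Proposition~\ref{prop:monoids} the counit at a strong monad $T$ is the natural transformation $-\otimes T(I)\Rightarrow T$ whose component is $T(\rho_A)\circ\st_{A,I}$. I would show this is a natural isomorphism whenever $T$ is a strong Frobenius monad. One direction of invertibility is cheap: $\st_{A,I}$ is unitary by hypothesis and $\rho_A$ is unitary in a monoidal dagger category, so $T(\rho_A)\circ\st_{A,I}$ is already an isomorphism with inverse $\st_{A,I}^\dag\circ T(\rho_A^\dag)=\st_{A,I}^{-1}\circ T(\rho_A)^{-1}$. Hence the counit is pointwise invertible, so it is a natural isomorphism of functors; the remaining content is that this isomorphism is a morphism of strong Frobenius monads, i.e.\ compatible with $\mu$, $\eta$ and the strengths on both sides — this is exactly the statement that the counit of the adjunction is a morphism in the relevant category, which the adjunction already gives us, but one should double-check it respects the Frobenius structure, which again reduces to the Frobenius law for $T$ by naturality and the strong-monad axioms $\st\circ(\id\otimes\mu)=\mu\circ T(\st)\circ\st$ and $\st\circ(\id\otimes\eta)=\eta$.

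For the unit, at a Frobenius monoid $B$ the unit of the adjunction is $\lambda_B\colon I\otimes B\to B$, which is unitary in a monoidal dagger category, hence an isomorphism; one checks it is a homomorphism of Frobenius monoids, which is immediate from naturality of $\lambda$ and the definition of the monoid structure on $(-\otimes B)(I)=I\otimes B$. Since both unit and counit are isomorphisms, the adjunction is an equivalence, and it restricts from strong monads / monoids to strong Frobenius monads / Frobenius monoids because we have checked the functors preserve these sub-structures. The main obstacle I anticipate is not any single hard computation but the bookkeeping: one must be careful that the Frobenius law and the unitarity of $\st$ are genuinely \emph{needed} and \emph{sufficient} — that is, that without unitarity of $\st$ the counit fails to be invertible (so the hypothesis is not redundant) and that the Frobenius structure transported along the already-established isomorphism matches the given one on the nose rather than merely up to a further isomorphism. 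Tracking the duals and the convention $(A\otimes B)^*=B^*\otimes A^*$ through the strength and costrength, as in the proof of the Theorem in Section~\ref{sec:monoids}, is where sign-of-the-diagram errors are most likely to creep in, so I would do that part graphically.
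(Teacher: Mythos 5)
Your overall strategy matches the paper's: restrict the adjunction of Proposition~\ref{prop:monoids} to the Frobenius setting and show the unit and counit are isomorphisms. Your treatment of the counit is in fact a genuine shortcut compared to the paper. The paper does not invert $T(\rho)\circ\st$ by appealing to unitarity directly; instead it shows (Lemma~\ref{lem:strictmorphism}) that the counit is also a morphism of comonads and then invokes the general fact (Lemma~\ref{lem:strictmorphismsareiso}) that a monoid homomorphism between Frobenius monoids which is also a comonoid homomorphism is invertible. Your observation --- that $\st_{A,I}$ and $\rho_A$ are unitary, hence $T(\rho_A)\circ\st_{A,I}$ is already pointwise invertible, and that the inverse of an invertible morphism of strong monads is automatically again one --- is correct and bypasses both appendix lemmas for this step. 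You are also right that compatibility of the counit with the comonoid structure is not needed for the equivalence itself, since morphisms of strong Frobenius monads are defined to be just morphisms of the underlying strong monads; the paper only needs that compatibility as input to Lemma~\ref{lem:strictmorphismsareiso}.

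The genuine gap is the claim that $T(I)$ is a Frobenius monoid, which you dispose of with ``should follow from the Frobenius law for $T$ evaluated at $I$.'' This is the hardest part of the theorem. The multiplication on $T(I)$ is $\mu_I\circ T(\rho)\circ\st_{T(I),I}$, so its dagger involves $\st^\dag$, and verifying the Frobenius law for this monoid requires a substantial diagram chase interleaving the Frobenius law of $T$, naturality of $\mu^\dag$ and $\st^\dag$, and the strong-monad axioms; the paper devotes Lemma~\ref{lem:froblaw} and the full-page diagram of \figurename~\ref{fig:froblaw} to exactly this. Your side remark that ``$\st_I$ is essentially $\lambda$'' is not the relevant fact here: the axiom $T(\lambda)\circ\st=\lambda$ identifies the component $\st_{I,B}$, whereas the component entering the monoid structure on $T(I)$ is $\st_{T(I),I}$, which is not a coherence isomorphism and whose unitarity is precisely the nontrivial hypothesis (as the paper's ``rather strong'' counterexample shows). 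So the skeleton is right, one step is genuinely simpler than the paper's, but the central computation is asserted rather than carried out.
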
 
\begin{proof}
  We already saw in Example~\ref{example:tensor} that $B\mapsto -\otimes B$ preserves the Frobenius law. We prove that $T\mapsto T(I)$ preserves the Frobenius law, too, in Lemma~\ref{lem:froblaw} in the Appendix. It remains to prove that they form an equivalence. Clearly the unit of the adjunction, $I \otimes B \sxto{\lambda_B} B$, is a natural isomorphism. 
  To prove that the counit $A \otimes T(I) \sxto{T(\rho) \circ \st} T(A)$ is also a natural isomorphism, notice that by definition it is a morphism of strong monads. In Lemma~\ref{lem:strictmorphism} in the Appendix we prove that it is also a morphism of comonads. But homomorphisms of Frobenius monoids must be isomorphisms by Lemma~\ref{lem:strictmorphismsareiso}.
\end{proof}

The previous theorem restricts to an equivalence between commutative/special Frobenius monoids and commutative/special strong Frobenius monads (see Corollary~\ref{cor:special} in the Appendix).


One might think it too strong to require $\st$ to be unitary. The following counterexample shows that Theorem~\ref{thm:strong} would fail if we abandoned that requirement.

\begin{example}
  Let's call a Frobenius monad \emph{rather strong} when it is simultaneously a strong monad. The operations of Theorem~\ref{thm:strong} do not form an adjunction between Frobenius monoids and rather strong Frobenius monads, because the counit of the adjunction would not be a well-defined morphism. To produce a counterexample where the counit does not preserve comultiplication comes down to finding a rather strong Frobenius monad with $T(\eta_A) \circ \eta_A \neq \mu_A^\dag \circ \eta_A$ for some $A$. This is the case when $T$ is $- \otimes B$ for a Frobenius monoid $B$ with $\tinyunit \otimes \tinyunit \neq (\tinymult)^\dag \circ \tinyunit$. Such Frobenius monoids certainly exist: if $G$ is any nontrivial group, regarded as a Frobenius monoid in $\cat{Rel}$ via Example~\ref{ex:groupoids}, then $\tinyunit \otimes \tinyunit$ is the relation $\{(*,(1,1))\}$, but $(\tinymult)^\dag \circ \tinyunit = \{(*,(g,g^{-1})) \mid g \in G\}$.
\end{example}

\section{Kleisli algebras}\label{sec:kleisli}

One of the standard categorical constructions when given a monad $T$ is to consider the category $\cat{C}_T$ of its Kleisli algebras. In monadic programming, this category gives semantics for computations with effects modeled by $T$, whereas the base category $\cat{C}$ only gives semantics for pure computations~\cite{jacobsheunenhasuo:arrows}. In this section we show that if $T$ is a Frobenius monad, then $\cat{C}_T$ is a dagger category. In fact we also show the converse, under a natural condition about cooperation with daggers. Thus effects modeled by a monad can be added without leaving the setting of reversible computations precisely when the monad is a Frobenius monad.

\begin{definition}
  If $\cat{C} \sxto{T} \cat{C}$ is a monad, its \emph{Kleisli category} $\cat{C}_T$ is defined as follows. Objects are the same as in $\cat{C}$. A morphism $A \to B$ in $\cat{C}_T$ is a morphism $A \sxto{f} T(B)$ in $\cat{C}$. Identities are given by $\eta$, and composition of $g$ and $f$ in $\cat{C}_T$ is given by $\mu \circ T(g) \circ f$.

  There is a forgetful functor $\cat{C}_T \to \cat{C}$ given by $A \mapsto T(A)$ on objects and $f \mapsto \mu \circ T(f)$ on morphisms. It has a left adjoint $\cat{C} \to \cat{C}_T$ given by $A \mapsto A$ on objects and $f \mapsto \eta \circ f$ on morphisms.
\end{definition}

\noindent
We now show that for Frobenius monads the Kleisli construction preserves daggers.

\begin{lemma}\label{lem:kleislidagger}
  If $T$ is a Frobenius monad on a dagger category $\cat{C}$, then $\cat{C}_T$ carries a dagger that commutes with the canonical functors $\cat{C}_T \to \cat{C}$ and $\cat{C}\to\cat{C}_T$.
\end{lemma}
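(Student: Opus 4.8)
The plan is to define the dagger on $\cat{C}_T$ directly on morphisms and then check functoriality, involutivity, and compatibility with the two canonical functors. Recall that a morphism $A \to B$ in $\cat{C}_T$ is a morphism $A \sxto{f} T(B)$ in $\cat{C}$. Given such an $f$, I would define its Kleisli-dagger $f^{\ddagger}\colon B \to A$ (i.e.\ a morphism $B \to T(A)$ in $\cat{C}$) by
\[
  f^{\ddagger} \;=\; \mu_A \circ T(f^\dag) \circ \eta_{T(B)} \circ \mathrm{(something)}
\]
— more precisely, the natural candidate is $f^{\ddagger} = \mu_A \circ T(f)^\dag \circ \eta_{TB}$, rewritten using $T(f)^\dag = T(f^\dag)$ when convenient; equivalently $f^{\ddagger}$ is the composite $B \xrightarrow{\eta_{TB}} T(TB) \xrightarrow{T(f^\dag)} T(TA) \xrightarrow{\mu_A} T(A)$, where $f^\dag\colon TB \to A$ is the $\cat{C}$-dagger of $f\colon A \to TB$ (using $A^\dag = A$, $(TB)^\dag = TB$). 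First I would check that this assignment is the identity on $\eta_A$: computing $\eta_A^{\ddagger} = \mu_A \circ T(\eta_A^\dag) \circ \eta_{TA}$ and using that $T$ is a dagger functor together with the comonad/monad unit laws (note $\eta_A^\dag = \varepsilon_A$ in the induced comonad) should collapse this to $\eta_A$ via the Frobenius-monad identities.

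Next I would verify contravariant functoriality, $(g \circ_T f)^{\ddagger} = f^{\ddagger} \circ_T g^{\ddagger}$, where $\circ_T$ denotes Kleisli composition $\mu \circ T(g) \circ f$. Unfolding both sides and using naturality of $\mu,\eta$ plus $T(h^\dag) = T(h)^\dag$, this reduces to an identity that should follow from the Frobenius law $T(\mu_A)\circ \mu^\dag_{T(A)} = \mu_{T(A)}\circ T(\mu_A^\dag)$ — this is exactly the place where the Frobenius hypothesis is essential rather than cosmetic. Then $f^{\ddagger\ddagger} = f$ should follow from $f^{\dag\dag} = f$ in $\cat{C}$ combined with the monad unit laws, after again massaging with naturality and the Frobenius identity; I would expect a short diagram chase here. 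Finally, compatibility with the functors is essentially immediate: for $\cat{C}\to\cat{C}_T$ sending $f\mapsto \eta\circ f$, one checks $(\eta_B \circ f)^{\ddagger} = \eta_A \circ f^\dag$ directly from the definition and the unit laws; for the forgetful functor $\cat{C}_T\to\cat{C}$ sending $f\mapsto \mu\circ T(f)$, one checks $\mu_A \circ T(f^{\ddagger}) = (\mu_B \circ T(f))^\dag$ using that $T$, $\mu$ and the Frobenius law interact correctly.

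The main obstacle I anticipate is the contravariant functoriality step: it is the only place where the full strength of the Frobenius-monad equation (as opposed to just "$T$ is a dagger functor" or "$T$ is a monad") is used, and getting the graphical/algebraic bookkeeping right — tracking which $\mu$'s and $\eta$'s appear and applying the Frobenius law in the correct spot — will require care. A convenient reformulation that may streamline the whole argument is to observe that Kleisli composition, identity, and the proposed dagger can all be phrased purely in terms of $(T,\mu,\eta)$ and the comonad $(T,\mu^\dag,\eta^\dag)$ on an object, so that the entire lemma becomes an instance of a single Frobenius-law manipulation; I would set up that reformulation first and then let functoriality and involutivity drop out of it, relegating the unit-law verifications to routine checks.
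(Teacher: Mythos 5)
There is a genuine gap: the formula you propose for the Kleisli dagger is not the correct one, and as written it does not even typecheck. For $f\colon A\to T(B)$ you need a morphism $B\to T(A)$, but your composite $B\xrightarrow{\eta_{TB}}T(TB)\xrightarrow{T(f^\dag)}T(TA)\xrightarrow{\mu_A}T(A)$ fails twice: $\eta_{T(B)}$ has domain $T(B)$, not $B$, and since $f^\dag\colon T(B)\to A$ the arrow $T(f^\dag)$ lands in $T(A)$, not $T^2(A)$, so the trailing $\mu_A$ has nothing to compose with. The most charitable repair, $f^\ddag=T(f^\dag)\circ\eta_{T(B)}\circ\eta_B$, is still wrong, because it uses the monad unit where the comultiplication $\mu^\dag$ of the induced comonad is required. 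The paper's definition is
\[
  f^\ddag \;=\; \big(B\xrightarrow{\;\eta_B\;}T(B)\xrightarrow{\;\mu_B^\dag\;}T^2(B)\xrightarrow{\;T(f^\dag)\;}T(A)\big),
\]
and exhibiting this formula is essentially the whole content of the lemma (the paper then declares the verification a straightforward calculation). Your candidate genuinely fails: compatibility with the forgetful functor requires $\mu_A\circ T(f^\ddag)=(\mu_B\circ T(f))^\dag=T(f^\dag)\circ\mu_B^\dag$, but for the repaired candidate one computes $\mu_A\circ T\big(T(f^\dag)\circ\eta_{TB}\circ\eta_B\big)=T(f^\dag)\circ T(\eta_B)$ using naturality of $\mu$ and the unit law, and $T(\eta_B)\neq\mu_B^\dag$ in general --- for instance for $T=-\otimes G$ on $\cat{Rel}$ with $G$ a nontrivial group, exactly the situation exploited in the ``rather strong'' counterexample of Section~4. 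Involutivity fails for the same reason.

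Two smaller points. First, your remark that the Frobenius law is needed \emph{only} for contravariant functoriality is inaccurate: with the correct formula the Frobenius law (together with naturality of $\mu^\dag$) is also what makes the dagger involutive and what makes it commute with the forgetful functor $\cat{C}_T\to\cat{C}$; only preservation of identities and compatibility with $\cat{C}\to\cat{C}_T$ come for free from the unit laws. Second, even setting aside the formula, the proposal is a plan rather than a proof --- each verification is deferred with ``should follow'' --- so the one thing the argument actually commits to is the definition of $f^\ddag$, and that is the part that is wrong.
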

\begin{proof} 
  A straightforward calculation establishes that
  \[
    \big(A \sxto{f} T(B)\big)
    \;\mapsto\;
    \big(B \sxto{\eta} T(B) \sxto{\mu^\dag} T^2(B) \sxto{T(f^\dag)} T(A)\big)
  \]
  is a dagger on $\cat{C}_T$ commuting with the canonical functors $\cat{C}\to \cat{C}_T$ and $\cat{C}_T\to \cat{C}$.
\end{proof}

The following theorem proves a converse of the previous lemma, under the natural condition that the ``reverse identity morphisms'' of the Kleisli category equal their own dagger. This gives another characterization of Frobenius monads, in terms of reversibility of their effectful computations.

\begin{theorem} 
   A monad $T$ on a dagger category $\cat{C}$ is a Frobenius monad if and only if $\cat{C}_T$ has a dagger such that:
   \begin{itemize}
     \item the functors $\cat{C} \to \cat{C}_T$ and $\cat{C}_T \to \cat{C}$ are dagger functors;
     \item the morphisms $\mu_A^\dag \colon T(A) \to T^2(A)$ of $\cat{C}$ are self-adjoint when regarded as morphisms $T(A) \to T(A)$ of $\cat{C}_T$.
   \end{itemize}
 \end{theorem}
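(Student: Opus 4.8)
The forward direction is just Lemma~\ref{lem:kleislidagger}: if $T$ is a Frobenius monad, the explicit dagger constructed there commutes with both canonical functors, so they are dagger functors; and one checks directly that the self-adjointness condition on $\mu_A^\dag$ holds (unwinding the definition of the Kleisli dagger applied to $\mu_A^\dag\colon T(A)\to T^2(A)$, viewed as a morphism $T(A)\to T(A)$ of $\cat{C}_T$, and using the Frobenius law together with the monad laws). So the real content is the converse: given a dagger on $\cat{C}_T$ satisfying the two bullet points, recover the Frobenius monad law for $T$.

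The plan is to use the two hypotheses to pin down the Kleisli dagger completely, and then read off the Frobenius law. Since $\cat{C}\to\cat{C}_T$ is a dagger functor, for any $f\colon A\to B$ in $\cat{C}$ the image $\eta\circ f$ must have Kleisli-dagger equal to $\eta\circ f^\dag$; this constrains the dagger on morphisms coming from $\cat{C}$. Since $\cat{C}_T\to\cat{C}$ is a dagger functor, for a Kleisli morphism $g\colon A\to T(B)$ the $\cat{C}$-morphism $\mu\circ T(g)\colon T(A)\to T(B)$ must have $\cat{C}$-dagger equal to $\mu\circ T(g^{\dagger_T})$. Now the key move: every Kleisli morphism $g\colon A\to T(B)$ factors in $\cat{C}_T$ as the Kleisli-image of $g\colon A\to T(B)$ (a $\cat{C}$-morphism, hence $\eta\circ g$ in Kleisli form, whose dagger we control by the first bullet) postcomposed with $\mu_B^\dag\colon T(B)\to T(B)$ regarded in $\cat{C}_T$ (whose dagger is itself, by the second bullet) — indeed $\mu_B \circ T(g) = \mu_B\circ T(\mathrm{id})\circ\cdots$ arranges $g = (\mu_B^\dag\text{-as-Kleisli-morphism}) \circ_{\cat{C}_T} (\eta\circ g)$ up to the monad laws. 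Taking daggers of this factorization, using contravariance of the dagger and the two bullets, forces $g^{\dagger_T}$ to be exactly the formula $T(g^\dag)\circ\mu^\dag\circ\eta$ of Lemma~\ref{lem:kleislidagger}. So the hypotheses determine a unique candidate dagger, namely the formula in Lemma~\ref{lem:kleislidagger}.

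It then remains to show that this formula being a genuine (contravariant, involutive, functorial) dagger on $\cat{C}_T$ \emph{implies} the Frobenius law. Functoriality of $\dagger_T$ — i.e. $(g\circ_{\cat{C}_T} f)^{\dagger_T} = f^{\dagger_T}\circ_{\cat{C}_T} g^{\dagger_T}$ — when written out with $g = h = \id[T(A)]$ in Kleisli form (so $f,g$ are suitable components of $\eta$ and $\mu$) becomes, after using the monad laws to cancel the $\eta$'s, precisely an equation between $T(\mu_A)\circ\mu^\dag_{T(A)}$ and $\mu_{T(A)}\circ T(\mu_A^\dag)$. Alternatively, involutivity $(\mu_A^\dag)^{\dagger_T\dagger_T} = \mu_A^\dag$ combined with the identity $(\mu_A^\dag)^{\dagger_T} = \mu_A^\dag$ (the second bullet) gives a relation that unwinds to the same law. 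Concretely I would test $\dagger_T$ against the morphism $\mu_A \colon T(A)\to T^2(A)$ viewed as a Kleisli morphism $T(A)\to T(A)$ (its Kleisli-dagger, by the formula, is $T(\mu_A^\dag)\circ\mu^\dag_{T(A)}\circ\eta_{T(A)} = T(\mu_A^\dag)\circ\mu^\dag_{T(A)}$ after a monad law), note that $\mu_A$ and $\mu_A^\dag$ are mutually $\cat{C}_T$-adjoint-ish, and push through; the condition that the formula is well-defined as an involution is what yields $T(\mu_A)\circ\mu_{T(A)}^\dag = \mu_{T(A)}\circ T(\mu_A^\dag)$.

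\textbf{Main obstacle.} The bookkeeping of translating between $\cat{C}_T$-composition ($\mu\circ T(-)\circ -$) and plain $\cat{C}$-composition while keeping track of which $\eta$'s and $\mu$'s cancel via the monad laws — essentially making rigorous the claim that the two bullet hypotheses \emph{uniquely} determine the dagger, and then that the uniquely-determined candidate is a dagger \emph{only if} $T$ is Frobenius. The sl: choosing the right test morphisms (components of $\mu$ and $\eta$ at $T(A)$) so that the abstract dagger axioms collapse exactly to $T(\mu_A)\circ\mu^\dag_{T(A)} = \mu_{T(A)}\circ T(\mu_A^\dag)$ rather than some weaker or entangled consequence; I expect this to need the dagger-functoriality of $\cat{C}_T\to\cat{C}$ in an essential way to strip off the outer $T$ and $\mu$.
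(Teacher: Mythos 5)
Your overall shape is reasonable---forward direction by Lemma~\ref{lem:kleislidagger}, converse by exploiting dagger-functoriality of the canonical functors together with self-adjointness of $\mu^\dag$---and the first half of your plan is essentially sound: since $T=G\circ F$ is a dagger functor and $G(h^{\dag})=G(h)^\dag$ reads $\mu_A\circ T(h^{\dag})=T(h^\dag)\circ\mu_B^\dag$, precomposing with $\eta_B$ already forces the candidate formula $h\mapsto T(h^\dag)\circ\mu^\dag\circ\eta$. However, two of your concrete steps are wrong as stated, and the decisive step is missing. First, the factorization $g=(\mu_B^\dag\text{-as-Kleisli-endomorphism})\circ_{\cat{C}_T}F(g)$ is false: that composite computes to $\mu_B^\dag\circ g$, not $g$; the correct factorization uses $\id[T(B)]$ regarded as a Kleisli morphism $T(B)\to B$. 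Second, and more seriously, the extraction of the Frobenius law is never actually performed. Feeding the second bullet into the derived formula yields only $T(\mu_A)\circ\mu^\dag_{T(A)}\circ\eta_{T(A)}=\mu_A^\dag$, which carries a trailing $\eta_{T(A)}$ and is not yet $T(\mu_A)\circ\mu^\dag_{T(A)}=\mu_{T(A)}\circ T(\mu_A^\dag)$. Your suggested alternatives do not close this: ``involutivity combined with $(\mu_A^\dag)^{\dag}=\mu_A^\dag$'' is vacuous (it returns $\mu_A^\dag=\mu_A^\dag$), and the simplification $T(\mu_A^\dag)\circ\mu^\dag_{T(A)}\circ\eta_{T(A)}=T(\mu_A^\dag)\circ\mu^\dag_{T(A)}$ is a type error---no monad law cancels that $\eta$, and the two sides have different domains.

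The missing idea is one further essential use of $G$ being a dagger functor, which is exactly how the paper proceeds. The paper exhibits all four arrows of the Frobenius square as $G$-images of Kleisli morphisms: $G(\id[T^2(A)])=\mu_{T(A)}$, $G(\eta_{T(A)}\circ\mu_A)=T(\mu_A)=G(F(\mu_A))$, $G(F(\mu_A^\dag))=T(\mu_A^\dag)$, and $G(\id[T^2(A)]^\dag)=\mu_{T(A)}^\dag$, so the law reduces to the single identity $\id[T^2(A)]\circ_{\cat{C}_T}F(\mu_A^\dag)=F(\mu_A)\circ_{\cat{C}_T}\id[T^2(A)]^\dag$ in $\cat{C}_T$. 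The two sides are each other's daggers (contravariance plus $F$ being a dagger functor), the left-hand side computes via the monad laws to $\mu_A^\dag$, and the second bullet then finishes the proof. Equivalently, one can recover the law from your intermediate identity by applying $T$, postcomposing with $\mu_{T(A)}$, and using naturality of $\mu$ together with $G(\mu^\dag_{T(A)}\circ\eta_{T(A)})=\mu^\dag_{T(A)}$---but that computation, which is where the theorem actually lives, is absent from your proposal.
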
 
\begin{proof}
  One direction follows from Lemma~\ref{lem:kleislidagger} and the observation that with that dagger the morphism $\mu^\dag_A \colon T(A)\to T^2(A)$ is self-adjoint in $\cat{C}_T$. For the other direction, we wish to show that the following diagram commutes for arbitrary $A$.
  \[\begin{tikzpicture}
  \matrix (m) [matrix of math nodes,row sep=1.5em,column sep=4em,minimum width=2em]
    {
     T^2(A) & T^3(A) \\
     T^3(A) & T^2(A) \\};
    \path[->]
    (m-1-1) edge node [left] {$\mu^\dag_{T(A)}$} (m-2-1)
            edge node [above] {$T(\mu^\dag_A)$} (m-1-2)
    (m-2-1) edge node [below] {$T(\mu_A)$} (m-2-2)
    (m-1-2) edge node [right] {$\mu_{T(A)}$} (m-2-2);
  \end{tikzpicture}\]
  Write $\cat{C} \sxto{F} \cat{C}_T$ and $\cat{C}_T \sxto{G} \cat{C}$ for the canonical functors.
  Note that if we consider $\id[T^2(A)]$ and $\eta_{T(A)}\circ\mu_A$ as morphisms of $\cat{C}_T$, then we have $G(\id[T^2(A)])=\mu_{T(A)}$, $G(\eta_{T(A)}\circ\mu_A)=T(\mu_A)$, and $F(\mu_A)=\eta_{T(A)}\circ\mu_A$.
  As $G$ is a dagger functor, we have found preimages of all the morphisms in the diagram. More explicitly, we know 
  that 
  \begin{align*}
    G\big(\id[T^2(A)]\circ F(\mu_A^\dag)\big) & = \mu_{T(A)}\circ T(\mu^\dag_A), \\
    G\big(F(\mu_A)\circ \id[T^2(A)]^\dag\big) & = T(\mu_A)\circ \mu^\dag_{T(A)}. 
  \end{align*}
  Hence it suffices to show $\id[T^2(A)]\circ F(\mu_A^\dag) = F(\mu_A) \circ \id[T^2(A)]^\dag$. As the left hand side is the dagger of the right hand side and $\mu^\dag$ is self-adjoint in $\cat{C}_T$, it suffices to show that either equals $\mu^\dag$. The following calculation does this for the left-hand side:
  \begin{align*} 
    \id[T^2(A)] \circ F(\mu_A^\dag)
    & = \mu_{T(A)} \circ T(\id[T^2(A)]) \circ \eta_{T^2(A)} \circ \mu^\dag_A \\
    & = \mu_{T(A)} \circ \eta_{T^2(A)} \circ \mu^\dag_A 
    = \mu^\dag_A
  \end{align*} 
  This completes the proof.
\end{proof}

Kleisli categories of commutative monads on symmetric monoidal categories are again symmetric monoidal~\cite{day:kleisli}. This extends to the reversible setting.

\begin{theorem}
  If $T$ is a commutative strong Frobenius monad on a symmetric monoidal dagger category $\cat{C}$, then $\cat{C}_T$ is a symmetric monoidal dagger category.
\end{theorem}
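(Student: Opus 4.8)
The plan is to combine the dagger on $\cat{C}_T$ from Theorem~\ref{lem:kleislidagger} (more precisely, from the preceding Kleisli-dagger theorem) with the known monoidal structure on the Kleisli category of a commutative strong monad, and then check compatibility. First I would recall the monoidal structure on $\cat{C}_T$ coming from Day's result~\cite{day:kleisli}: on objects it agrees with $\otimes$ in $\cat{C}$, and on morphisms $f \colon A \to T(B)$, $g \colon C \to T(D)$ one sets $f \otimes_T g := \dst_{B,D} \circ (f \otimes g) \colon A \otimes C \to T(B \otimes D)$, using that commutativity makes the two canonical double strengths $\dst$ and $\dst'$ coincide so the definition is unambiguous; the coherence isomorphisms $\alpha, \lambda, \rho, \sigma$ of $\cat{C}_T$ are the images under $F$ of those of $\cat{C}$. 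Strongness of $T$ is what makes $\otimes_T$ a bifunctor and commutativity is what makes it associative and symmetric with the inherited coherence data; this part is the folklore content and I would cite it rather than reprove it.

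Next I would verify that this monoidal structure is a \emph{dagger} monoidal structure, i.e.\ that $(f \otimes_T g)^{\dag_T} = f^{\dag_T} \otimes_T g^{\dag_T}$ and that the coherence isomorphisms are unitary in $\cat{C}_T$. The coherence isomorphisms are $F$-images of unitaries in $\cat{C}$, and $F$ is a dagger functor by Lemma~\ref{lem:kleislidagger}, so they are automatically unitary in $\cat{C}_T$; likewise $\sigma$. The real computation is the tensor-of-daggers identity. Unfolding, $\dag_T$ sends $f \colon A \to T(B)$ to $T(f^\dag) \circ \mu_B^\dag \circ \eta_B$, and one must show
\[
  T\big((f \otimes_T g)^\dag\big) \circ \mu^\dag_{B \otimes D} \circ \eta_{B \otimes D}
  \;=\;
  \dst_{A,C} \circ \big( (T(f^\dag)\circ\mu_B^\dag\circ\eta_B) \otimes (T(g^\dag)\circ\mu_D^\dag\circ\eta_D) \big).
\]
Here the key structural inputs are: $\st$ is unitary (so $\st^\dag = \cst$ and $T$ is a costrong comonad with $\delta = \mu^\dag$, $\varepsilon = \eta^\dag$, $\cst = \st^\dag$, as noted in the excerpt), naturality of $\st$ and $\st'$, the strong-monad compatibility $\st \circ (\id \otimes \mu) = \mu \circ T(\st)\circ\st$ together with its dagger, and the Frobenius law for $\mu$. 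I expect that once everything is dualized via the dagger, the identity becomes exactly the statement that the costrong comonad structure interacts with $\dst$ the way the strong monad structure interacts with $\dst$, which is forced by taking daggers of the equations defining $\dst$; commutativity again guarantees the two double strengths remain equal after dualizing.

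The main obstacle is this last compatibility calculation: it is a diagram chase through the definitions of $\dst$, $\st$, $\st'$, $\mu$, $\eta$ and their daggers, and the bookkeeping is delicate because $\dst$ already involves a triple composite of strengths and a $\mu$. I would organize it by first proving the auxiliary fact that the Kleisli dagger interacts with the \emph{strength} correctly --- namely that $F(\st_{A,B})$ behaves as a (co/)strength for the Kleisli tensor, equivalently that $\st$ and $\cst = \st^\dag$ are mutually inverse and natural enough --- and then deduce the tensor-of-daggers identity formally from that plus naturality, rather than expanding $\dst$ by brute force. A clean way to package this is to observe that $\cat{C}_T$ with its dagger and the $F$-images of the coherence data is a dagger monoidal category \emph{iff} $F \colon \cat{C} \to \cat{C}_T$ is a strong dagger monoidal functor, and that strongness of $F$ on the monoidal level is classical (Day) while the dagger compatibility of $F$ reduces to the unitarity of $\st$ via $\cst = \st^\dag$; symmetry then transports along $F$ for free. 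This reduces the whole theorem to the unitarity of $\st$, which is part of the hypothesis of a strong Frobenius monad.
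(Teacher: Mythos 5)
Your setup matches the paper's: you take Day's monoidal structure $f\otimes_T g=\dst\circ(f\otimes g)$ with coherence isomorphisms the $F$-images of those of $\cat{C}$, and you observe these are unitary because $F$ is a dagger functor (Lemma~\ref{lem:kleislidagger}). Where you diverge is the remaining identity $(f\otimes_T g)^{\dag}=f^{\dag}\otimes_T g^{\dag}$. The paper does not attack this abstractly: it invokes Theorem~\ref{thm:strong} to replace $T$ by the isomorphic monad $-\otimes T(I)$ for the Frobenius monoid $T(I)$, checks that this isomorphism of monads induces an isomorphism of Kleisli categories preserving daggers and tensor on the nose, and then verifies the identity in $\cat{C}_{-\otimes T(I)}$ by a graphical computation in the base monoidal dagger category. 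Your plan is to stay with the abstract $T$ and chase the definitions of $\dst$, $\st$, $\mu$ and their daggers. That is in principle viable, but it is exactly the ``delicate bookkeeping'' the paper's reduction is designed to avoid, and you have not actually carried it out.

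The concrete gap is in your final packaging. The claim that $\cat{C}_T$ is a monoidal dagger category \emph{iff} $F\colon\cat{C}\to\cat{C}_T$ is a strong dagger monoidal functor is not right: a general Kleisli morphism $A\to T(B)$ is not in the image of $F$, so no amount of compatibility of $F$ with daggers and tensors yields $(f\otimes_T g)^\dag=f^\dag\otimes_T g^\dag$ for arbitrary $f,g$. (Every Kleisli morphism factors as $\varepsilon_B\circ F(f)$ with $\varepsilon_B=\id[T(B)]$ viewed as a morphism $T(B)\to B$ of $\cat{C}_T$, so the irreducible content is the identity for these $\varepsilon$'s, and that is where $\mu$, $\mu^\dag$ and the Frobenius law genuinely interact with $\dst$.) Consequently your conclusion that ``the whole theorem reduces to the unitarity of $\st$'' over-reduces: unitarity of $\st$ alone does not suffice, the Frobenius law is needed in the tensor-of-daggers computation itself and not only to make the Kleisli dagger exist. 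To complete your route you would need to either carry out the diagram chase you sketch in the middle paragraph, or adopt the paper's shortcut of transporting everything to $\cat{C}_{-\otimes T(I)}$, where the graphical calculus of $\cat{C}$ makes the remaining verification genuinely short.
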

\begin{proof} 
  The monoidal structure on $\cat{C}_T$ is given by $A\otimes_T B=A\otimes B$ on objects and by $f\otimes_T g=\dst\circ (f\otimes g)$ on morphisms. 
  The coherence isomorphisms of $\cat{C}_T$ are images of those in $\cat{C}$ under the functor $\cat{C}\to\cat{C}_T$. 
  This functor preserves daggers and hence unitaries, making all coherence isomorphisms of $\cat{C}_T$ unitary. 
  It remains to check that the dagger on $\cat{C}_T$ satisfies $(f\otimes_T g)^\dag=f^\dag\otimes_T g^\dag$.  By Theorem~\ref{thm:strong}, $T$ is isomorphic to $-\otimes T(I)$, and it is straightforward to check that this induces an isomorphism between the respective Kleisli categories that preserves daggers and monoidal structure on the nose. Thus it suffices to check that this equation holds on $\cat{C}_{-\otimes T(I)}$, which can be done with a straightforward graphical argument.
\end{proof}

\section{Frobenius--Eilenberg--Moore algebras}\label{sec:fem}

The other canonical standard categorical construction when given a monad $T$ is to consider the category $\cat{C}^T$ of its Eilenberg--Moore algebras. In monadic programming, these are understood to expand effectful computations to pure computations~\cite{jacobsheunenhasuo:arrows}. This section identifies the largest full subcategory of $\cat{C}^T$ that is still reversible.

\begin{definition}
  An \emph{Eilenberg--Moore algebra} $(A,a)$ for a monad $T$ is a morphism $T(A) \sxto{a} A$ satisfying $a \circ T(a) = a \circ \mu$ and $a \circ \eta = \id$. A \emph{morphism of Eilenberg--Moore algebras} $(A,a) \to (B,b)$ is a morphism $A \sxto{f} B$ satisfying $b \circ T(f) = f \circ a$. These form a category $\cat{C}^T$.
\end{definition}

We will again need cooperation of such algebras with daggers when present.

\begin{definition} 
   Let $T$ be a monad on a dagger category $\cat{C}$. A \emph{Frobenius--Eilenberg--Moore algebra}, or \emph{FEM-algebra} for short, is an Eilenberg--Moore algebra $(A,a)$ that makes the following diagram commute.
   \begin{equation}\label{eq:femlaw}\begin{aligned}\begin{tikzpicture}
     \matrix (m) [matrix of math nodes,row sep=1.5em,column sep=4em,minimum width=2em]
     {
      T(A) & T^2(A) & \\
      T^2(A) & T(A) \\};
     \path[->]
     (m-1-1) edge node [left] {$\mu^\dag$} (m-2-1)
             edge node [above] {$T(a)^\dag$} (m-1-2)
     (m-2-1) edge node [below] {$T(a)$} (m-2-2)
     (m-1-2) edge node [right] {$\mu$} (m-2-2);
   \end{tikzpicture}\end{aligned}\end{equation}   
   We call this the \emph{Frobenius law} for Eilenberg--Moore algebras.
\end{definition}

\begin{example}\label{ex:kleislifem}
  The Kleisli category $\cat{C}_T$ of any monad $T$ sits inside $\cat{C}^T$ as the \emph{free algebras} $(T(A),\mu_A)$.
  If $T$ is a Frobenius monad on a dagger category $\cat{C}$, any free algebra is an FEM-algebra.
\end{example}
\begin{proof}
  The Frobenius law for the free algebra is the Frobenius law of the monad.
\end{proof}

There are many EM-algebras that are not FEM-algebras; a family of examples can be derived from~\cite[Theorem~6.4]{pavlovic:abstraction}. Here is a concrete example.

\begin{example}\label{ex:emnonfem}
  Let $A=\mathbb{M}_2(\mathbb{C})$ be the Hilbert space of $2$-by-$2$-matrices, with inner product $\left\langle a,b\right\rangle=\tfrac{1}{2}\Tr (a^\dag \circ b)$. Matrix multiplication gives a map $m\colon A\otimes A\to A$ making $A$ a Frobenius monoid in $\cat{FHilb}$, so that $T=-\otimes A$ is a Frobenius monad. Let $u\in A$ be a unitary matrix, and define $U\colon A\to A$ by $U(a)=u^\dag\circ a\circ u$. Now $U^\dag(a)=u\circ a\circ u^\dag$ and $U$ is an endomorphism of the monoid $A$, making $h=m\circ (\id\otimes U)$ an EM-algebra. It is an FEM-algebra if and only if $u=u^\dag$.
\end{example}
\begin{proof}
  The Frobenius law~\eqref{eq:femlaw} means:
  \[
    \begin{pic}
     \draw (0,.2) to (0,1) to[out=90,in=180] (.5,1.5) to (.5,1.8);
     \draw (.5,1.5) to[out=0,in=90] (1,1) to[out=-90,in=180] (1.5,.5) to (1.5,.2);
     \draw (1.5,.5) to[out=0,in=-90] (2,1) to (2,1.8);
     \node[morphism] at (1,1) {$U$};
     \node[dot] at (.5,1.5) {};
     \node[dot] at (1.5,.5) {};
    \end{pic}
    \quad = \quad
    \begin{pic}[xscale=-1]
     \draw (0,.2) to (0,1) to[out=90,in=180] (.5,1.5) to (.5,1.8);
     \draw (.5,1.5) to[out=0,in=90] (1,1) to[out=-90,in=180] (1.5,.5) to (1.5,.2);
     \draw (1.5,.5) to[out=0,in=-90] (2,1) to (2,1.8);
     \node[morphism,hflip] at (1,1) {$U$};
     \node[dot] at (.5,1.5) {};
     \node[dot] at (1.5,.5) {};
    \end{pic}
  \]
  This comes down to $U=(U^*)^\dag$, that is, $u=u^\dag$. 
\end{proof}

The following two results highlight the importance of FEM-algebras to daggers.
First, extending from pure computations to FEM--computations is still reversible.

\begin{proposition}
  Let $T$ be a Frobenius monad on a dagger category $\cat{C}$. The dagger on $\cat{C}$ induces a dagger on the category of FEM-algebras of $T$. 
\end{proposition}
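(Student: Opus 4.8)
The plan is to define the dagger on the category $\cat{C}^T_{\mathrm{FEM}}$ of FEM-algebras by sending an algebra $(A,a)$ to $(A,a^\sharp)$ where $a^\sharp := a \circ T(a)^\dag \circ \eta_{T(A)}$ — equivalently, $a^\sharp$ is obtained by viewing $a\colon T(A)\to A$ as the object-component of a Kleisli-style reversal, mirroring the dagger formula $f \mapsto T(f^\dag)\circ \mu^\dag \circ \eta$ of Lemma~\ref{lem:kleislidagger}. On morphisms the dagger is just the dagger of $\cat{C}$: if $f\colon (A,a)\to(B,b)$ is an EM-morphism, I claim $f^\dag\colon (B,b^\sharp)\to(A,a^\sharp)$ is too. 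The verification splits into three tasks: (1) $(A,a^\sharp)$ is again an EM-algebra; (2) it is again an FEM-algebra, and moreover $(a^\sharp)^\sharp = a$, so the operation is involutive on objects; (3) $f^\dag$ is an EM-morphism whenever $f$ is, and $f^{\dag\dag}=f$ is inherited from $\cat{C}$. Functoriality (identities and composites) is then automatic since on morphisms nothing changed from the ambient dagger.

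First I would unwind $a^\sharp$ using the Frobenius law \eqref{eq:femlaw} for $(A,a)$, which says $T(a)\circ\mu^\dag = \mu\circ T(a)^\dag$. This is the key algebraic input: it lets me move the ``reversed'' multiplication past $T(a)$, which is exactly what is needed to turn the two EM-algebra axioms for $a$ into the two EM-algebra axioms for $a^\sharp$. Concretely, $a^\sharp\circ\eta_A = a\circ T(a)^\dag\circ\eta_{T(A)}\circ\eta_A = a\circ T(a)^\dag \circ T(\eta_A)\circ \eta_A$ (naturality of $\eta$) $= a\circ (T(a)\circ T(\eta_A))^{\dag}\circ\eta_A$ — wait, $T(a)^\dag\circ T(\eta_A) = (T(\eta_A)^\dag \circ T(a))^\dag$ is not quite right since $T(\eta_A)$ need not be unitary; instead I use that $a\circ\eta_A=\id$ is an \emph{algebra} identity and dualize it through the dagger functoriality $T(f^\dag)=T(f)^\dag$, together with the monad laws $\mu\circ\eta_T=\id=\mu\circ T\eta$ and their daggers $\eta_T^\dag\circ\mu^\dag = \id$ etc. The cleanest route is probably to observe that $(A,a^\sharp)$ is precisely the image of $(A,a)$ under an equivalence $\cat{C}^T \simeq \cat{C}^{T}$ built from the dagger, but since that equivalence is what we are constructing, I would instead do the three bookkeeping calculations directly in the graphical/commutative-diagram calculus.

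For step (2), the $\sharp$-involution: compute $(a^\sharp)^\sharp = a^\sharp \circ T(a^\sharp)^\dag \circ \eta$ and simplify using the Frobenius law again plus the special-case-free monad identities; I expect this to collapse to $a$ after two applications of \eqref{eq:femlaw} and a unit-counit zig-zag, analogous to how the Kleisli dagger of Lemma~\ref{lem:kleislidagger} squares to the identity. That $(A,a^\sharp)$ satisfies \eqref{eq:femlaw} then follows because the Frobenius law is symmetric under $a\leftrightarrow a^\sharp$ in a way made precise by taking daggers of the defining square. For step (3), that $f$ an EM-morphism implies $f^\dag$ one: from $b\circ T(f)=f\circ a$ take daggers to get $T(f)^\dag\circ b^\dag = a^\dag\circ f^\dag$, then pre/post-compose with $\eta$'s and $T(a)^\dag$, $T(b)^\dag$ and push through naturality of $\eta$ and the Frobenius law to land on $a^\sharp\circ T(f^\dag) = f^\dag\circ b^\sharp$.

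The main obstacle I anticipate is step (2) — verifying $(a^\sharp)^\sharp = a$ and the FEM-condition for $a^\sharp$ — because it requires chaining the Frobenius law with the full battery of (dagger-)monad identities in the right order, and it is easy to get lost; the graphical calculus for the \emph{morphisms} of $\cat{C}$ helps but, as the paper's footnote warns, we are mixing endofunctor-level and morphism-level reasoning, so some steps must be done with honest commutative diagrams. A secondary subtlety is confirming that on the subcategory of FEM-algebras (not all EM-algebras) $a^\sharp$ really is well-defined as an FEM-algebra, i.e. that FEM-ness is exactly the closure condition making $\sharp$ an endofunctor — this is, in spirit, why FEM-algebras are singled out, and it is reassuring that Example~\ref{ex:kleislifem} already shows free algebras are fixed points of a compatible construction.
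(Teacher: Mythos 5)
Your approach is off target in a fundamental way. By the paper's own definition, a dagger is a functor that is the \emph{identity on objects} ($A^\dag = A$). The objects of the category of FEM-algebras are the pairs $(A,a)$, so the induced dagger must send $(A,a)$ to $(A,a)$ itself and act on morphisms only by the ambient dagger of $\cat{C}$. There is no room for a modified structure map $a^\sharp$: if your construction produced $(A,a^\sharp)\neq(A,a)$ it would simply not be a dagger. Consequently the entire content of the proposition is your step (3) alone, and in a stronger form than you state it: one must show that if $f\colon(A,a)\to(B,b)$ is a morphism of FEM-algebras, then $f^\dag$ is a morphism $(B,b)\to(A,a)$ of the \emph{unchanged} algebras, i.e.\ $b\circ T(f)=f\circ a$ implies $a\circ T(f^\dag)=f^\dag\circ b$. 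This is exactly what the paper proves, via a commutative diagram that uses the Frobenius law~\eqref{eq:femlaw} for both $(A,a)$ and $(B,b)$, naturality of $\mu$ and $\eta^\dag$, and the comonad identity $\eta^\dag_{T}\circ\mu^\dag=\id$; the FEM condition is precisely what makes this transposition go through, and your steps (1) and (2) are not needed at all.

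A secondary problem is that your formula $a^\sharp := a\circ T(a)^\dag\circ\eta_{T(A)}$ does not typecheck: $\eta_{T(A)}\colon T(A)\to T^2(A)$ while $T(a)^\dag\colon T(A)\to T^2(A)$, so the composite is undefined, and the analogy with the Kleisli dagger of Lemma~\ref{lem:kleislidagger} does not apply since $a\colon T(A)\to A$ is not a Kleisli morphism $A\to T(B)$. Even setting the type error aside, your plan proves (at best) that $f^\dag$ is a morphism $(B,b^\sharp)\to(A,a^\sharp)$, which is not the required statement unless $a^\sharp=a$ — and establishing \emph{that} would again reduce to the direct argument you are trying to avoid. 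I recommend discarding the $\sharp$-construction and carrying out the single diagram chase: take the dagger of $b\circ T(f)=f\circ a$, whisker with $T$, and use~\eqref{eq:femlaw} on each side to convert $T(a)^\dag$ and $T(b)^\dag$ into forward-directed maps before collapsing with $\eta^\dag$.
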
 
\begin{proof}
  Let $f \colon (A,a) \to (B,b)$ be a morphism of FEM-algebras; we have to show that $f^\dag$ is a morphism $(B,b)\to(A,a)$. It suffices to show that   $b \circ T(f) = f \circ a$ implies $a \circ T(f^\dag) = f^\dag \circ b$.
  Consider the following diagram:
  \[\begin{tikzpicture}[font=\small,scale=.6]
    \matrix (m) [matrix of math nodes,row sep=2em,column sep=2em,minimum width=1em]
    {
      & T(B) & & T(A)  \\
     T^2(B) & T^2(B) & T^2(A) & T^2(A) & T(A)  \\
      &T(B) & &T(A) &       \\      
      &T(B) & &B &A    \\};
    \path[->]
    (m-1-2) edge node [above] {$Tf^\dag$} (m-1-4)
                  edge node [above] {$\mu^\dag$} (m-2-1)
                  edge node [left] {$Tb^\dag$}(m-2-2)
    (m-1-4)  edge node [above] {$\id$} (m-2-5)
                  edge node [left] {$\mu^\dag$}(m-2-4)
                  edge node [above] {$Ta^\dag$} (m-2-3) 
    (m-2-1) edge node [below] {$Tb$} (m-3-2)
                  edge node [below] {$\eta^\dag$} (m-4-2)
    (m-2-2) edge node [left] {$\mu$} (m-3-2) 
                 edge node [above] {$T^2f^\dag$} (m-2-3)
    (m-2-3) edge node [right] {$\mu$} (m-3-4) 
    (m-2-4) edge node [right] {$Ta$} (m-3-4) 
                 edge node [below] {$\eta^\dag$} (m-2-5)
    (m-2-5) edge node [right] {$a$} (m-4-5)
    (m-3-2) edge node [above] {$Tf^\dag$} (m-3-4)
    (m-3-4) edge node [above] {$\eta^\dag$} (m-4-5)
    (m-4-2) edge node [below] {$b$} (m-4-4) 
    (m-4-4) edge node [below] {$f^\dag$} (m-4-5);
    \node at (-4.25,1) {(i)};
    \node at (0,2) {(ii)};
    \node at (0,0) {(iii)};
    \node at (1.9,1) {(iv)};
    \node at (4.1,1.6) {(v)};
    \node at (4.7,-0.5) {(vi)};
    \node at (0,-2) {(vii)};
  \end{tikzpicture}\]
  Region (i) is the Frobenius law of $(B,b)$; commutativity of (ii) follows from the assumption that $f$ is a morphism $(A,a)\to (B,b)$ by applying $T$ and $\dag$; (iii) is naturality of $\mu$; (iv) is the Frobenius law of $(A,a)$; (v) commutes since $T$ is a comonad; (vi) and (vii) commute by naturality of $\eta^\dag$. 
\end{proof}

Second, FEM--computations are the largest class that stays reversible.

\begin{theorem}
  FEM-algebras form the largest full subcategory of $\cat{C}^T$ containing $\cat{C}_T$ that carries a dagger commuting with the forgetful functor $\cat{C}^T \to \cat{C}$.
\end{theorem}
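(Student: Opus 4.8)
The plan is to prove the statement as two inclusions of full subcategories: that the full subcategory spanned by the FEM-algebras itself contains $\cat{C}_T$ and carries a dagger commuting with the forgetful functor $U\colon\cat{C}^T\to\cat{C}$, and that every full subcategory with these two properties is contained in it; "largest" then follows. The first part is essentially in hand: the FEM-algebras span a full subcategory of $\cat{C}^T$ by construction, it contains $\cat{C}_T$ by Example~\ref{ex:kleislifem}, and the previous Proposition supplies a dagger which, being $f\mapsto f^\dag$ on underlying morphisms, commutes with $U$.

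For the second part, fix a full subcategory $\cat{D}$ of $\cat{C}^T$ with $\cat{C}_T\subseteq\cat{D}$ and a dagger $(-)^\star$ on $\cat{D}$ satisfying $U(f^\star)=U(f)^\dag$, and let $(A,a)$ be an object of $\cat{D}$; I want to show it is an FEM-algebra. The key observation is that the structure map $a$ is \emph{itself} a morphism of Eilenberg--Moore algebras $(T(A),\mu_A)\to(A,a)$, because the required identity $a\circ T(a)=a\circ\mu_A$ is exactly the algebra axiom. The free algebra $(T(A),\mu_A)$ lies in $\cat{D}$ since $\cat{C}_T\subseteq\cat{D}$, and $(A,a)$ lies in $\cat{D}$ by assumption, so fullness puts $a$, and therefore its $\cat{D}$-dagger $a^\star\colon(A,a)\to(T(A),\mu_A)$, in $\cat{D}$. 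Since $U$ is the identity on underlying morphisms, $U(a^\star)=U(a)^\dag=a^\dag$, so $a^\dag\colon A\to T(A)$ is an algebra morphism $(A,a)\to(T(A),\mu_A)$; unwinding this gives $\mu_A\circ T(a^\dag)=a^\dag\circ a$, and since $T$ is a dagger functor (being a Frobenius monad) this reads $\mu_A\circ T(a)^\dag=a^\dag\circ a$. Applying $\dag$ to both sides gives $T(a)\circ\mu_A^\dag=a^\dag\circ a$ as well, and equating the two right-hand sides yields $\mu_A\circ T(a)^\dag=T(a)\circ\mu_A^\dag$, which is exactly the Frobenius law~\eqref{eq:femlaw}. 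Hence $(A,a)$ is an FEM-algebra, so $\cat{D}$ is contained in the subcategory of FEM-algebras, completing the argument.

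I do not anticipate a genuine obstacle. The only step that needs care is converting "the dagger commutes with the forgetful functor" into the precise statement that the $\cat{D}$-dagger of $a$ has underlying $\cat{C}$-morphism $a^\dag$ — this relies on $U$ acting as the identity on morphisms, so that a $\cat{C}^T$-morphism is pinned down by its underlying map. Once that is settled, the whole proof hinges on the self-duality under $\dag$ of the single identity $\mu_A\circ T(a)^\dag=a^\dag\circ a$, from which the Frobenius law for $(A,a)$ drops out by comparison.
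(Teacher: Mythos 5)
Your proposal is correct and follows essentially the same route as the paper: the key step in both is that $a$ is an algebra morphism $(T(A),\mu_A)\to(A,a)$ from a free (hence Kleisli) algebra, so the hypothesised dagger forces $a^\dag$ to be an algebra morphism $(A,a)\to(T(A),\mu_A)$, and unwinding that condition (using $T(a^\dag)=T(a)^\dag$ and self-adjointness of $a^\dag\circ a$) is precisely the Frobenius law~\eqref{eq:femlaw}. You spell out the unwinding more explicitly than the paper does, but there is no substantive difference.
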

\begin{proof}
  Suppose that an EM-algebra $(A,a)$ is such that for any free algebra $(T(B),\mu_B)$ and any morphism $f :T(B)\to A$, $f$ is a morphism of EM-algebras $(T(B),\mu_B)\to (A,a)$ iff $f^\dag$ is a morphism $(A,a)\to (T(B),\mu_B)$ of EM-algebras. Now $(A,a)$ being an EM-algebra implies that $a$ is a morphism $(T(A),\mu_A)\to (A,a)$. Thus by assumption $a^\dag$ is a morphism $(A,a)\to (T(A),\mu_A)$, which implies that $(A,a)$ is an FEM-algebra.
\end{proof}

\section{Quantum measurement}\label{sec:measurement}

This final section exemplifies the relevance of FEM-algebras to quantum computation, by indicating how quantum measurement fits neatly in effectful functional programming as \emph{handlers} of Frobenius monads~\cite{plotkinpretnar:handlers,kammarlindleyoury:handlers}.

\begin{example}
  Let $B$ be a finite-dimensional Hilbert space. A choice of orthonormal basis makes $B$ a commutative Frobenius monoid in $\cat{FHilb}$ via Example~\ref{ex:cstaralgebras}. Hence $T=- \otimes B$ is a (commutative strong) Frobenius monad on $\cat{FHilb}$ by Theorem~\ref{thm:strong}.
\end{example}

Traditionally, effectful computations are modelled as morphisms in the Kleisli category~\cite{moggi:monads,wadler:monads}.
In the above example, those are just morphisms $A \to A \otimes B$ in $\cat{FHilb}$. 
Quantum measurements are indeed morphisms of this type, but they satisfy more requirements, such as von Neumann's \emph{projection postulate}: repeating a measurement is equivalent to copying the outcome of the first measurement. These requirements make the dagger of the morphism $A \to A \otimes B$ precisely an FEM-algebra, see~\cite[Theorems~1.5 and~1.6]{coeckepavlovic:measurement}.\footnote{Technically, the monad has to be lifted to a category of so-called completely positive maps, see~\cite{coeckepavlovic:measurement}.} The following proposition summarizes.

\begin{proposition}\label{prop:measurement}
  Quantum measurements with outcomes modeled by a commutative strong Frobenius monad on $\cat{FHilb}$ correspond precisely to its FEM-algebras. 
  \qed
\end{proposition}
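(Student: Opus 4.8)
The plan is to reduce Proposition~\ref{prop:measurement} to the already-cited classification of quantum measurements from \cite{coeckepavlovic:measurement}, and then identify the structure that classification produces with the FEM-algebras of the monad $T = -\otimes B$. First I would pin down what ``quantum measurement with outcomes in $B$'' means in this setting: a morphism $A \to A\otimes B$ in $\cat{FHilb}$ (equivalently, a Kleisli endomorphism of $T$ on $A$) satisfying von Neumann's repeatability/projection postulate, interpreted diagrammatically via the comonoid on $B$ as in \cite{coeckepavlovic:measurement}. The key observation, already flagged in the footnote, is that one must pass to the category of completely positive maps (the $\mathrm{CP}$- or $\mathrm{CPM}$-construction) so that ``measurement'' is the physically meaningful notion; the Frobenius monad $-\otimes B$ lifts to this category because $B$, being a commutative Frobenius monoid, lifts, and this lifting is routine to check.

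The main steps, in order, are: (1) recall from \cite[Theorems~1.5 and~1.6]{coeckepavlovic:measurement} that nondegenerate measurements $A \to A\otimes B$ satisfying the projection postulate are in bijection with certain $B$-coalgebra-like structures on $A$, namely the self-adjoint ones whose adjoint is a morphism $T(A)\to A$ satisfying the algebra laws; (2) take daggers throughout, turning such a measurement into a morphism $a\colon T(A)\to A$ and verifying that the two EM-algebra axioms $a\circ T(a) = a\circ\mu$ and $a\circ\eta = \id$ correspond exactly to the two defining conditions of a measurement in \cite{coeckepavlovic:measurement}; (3) verify that the \emph{extra} condition in \cite{coeckepavlovic:measurement} — the one encoding repeatability/self-adjointness of the measurement as a morphism of the appropriate type — is precisely the Frobenius law~\eqref{eq:femlaw} for the EM-algebra $(A,a)$; this is the heart of the correspondence and is essentially the observation, already visible in Example~\ref{ex:emnonfem} and in the identification of free FEM-algebras in Example~\ref{ex:kleislifem}, that ``$\mu\circ T(a) = T(a)^\dag$-symmetry'' is exactly von Neumann repeatability written for the comonoid; (4) conclude that the two constructions (measurement $\mapsto$ its dagger; FEM-algebra $\mapsto$ its dagger read as a measurement) are mutually inverse, giving the claimed bijection, and check it is natural/functorial in $A$ so that it is a correspondence of the expected kind.

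I expect step (3) to be the main obstacle: it requires carefully matching the diagrammatic formulation of the projection postulate in \cite{coeckepavlovic:measurement} — which is phrased in terms of the comonoid $(B, \tinymult^\dag, \tinyunit^\dag)$ acting on $A$ and a copyability/self-adjointness clause — against the commuting square~\eqref{eq:femlaw}, which is phrased purely in terms of $\mu$, $\mu^\dag$, and $T(a)$. Since $T = -\otimes B$, both sides unfold, via Example~\ref{example:tensor}, to diagrams built from $\tinymult$, $\tinymult^\dag$ and $a$ on the wires $A$ and $B$; the task is to see that these diagrams coincide, using the Frobenius law and specialness of $B$ (orthonormal basis) where needed. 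A secondary subtlety is the $\mathrm{CP}$-lifting: one must confirm that FEM-algebras of the lifted monad are what \cite{coeckepavlovic:measurement} classifies, rather than FEM-algebras of $-\otimes B$ on $\cat{FHilb}$ itself; but this is exactly the content of the footnote and can be cited rather than reproved. Everything else — the EM-algebra axioms, functoriality, invertibility of the two assignments — is a routine unwinding of definitions once step (3) is in hand.
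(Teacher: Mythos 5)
Your proposal follows essentially the same route as the paper, which offers no independent proof of Proposition~\ref{prop:measurement} but treats it as a summary of~\cite[Theorems~1.5 and~1.6]{coeckepavlovic:measurement}: the projection postulate makes the dagger of a measurement $A \to A \otimes B$ precisely an FEM-algebra, with the passage to completely positive maps relegated to the footnote exactly as you handle it. Your steps (1)--(4) simply spell out in more detail what the paper delegates to the citation.
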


Consider the exception monad $T$ that adds exceptions from a set $E$ to a computation by $T(A)=A+E$.
Intercepting exceptions means executing a computation $f_e$ for each $e \in E$, and a computation $f$ if no exception is raised. Thus a \emph{handler} for $T$ specifies an EM-algebra $(A,a)$ and a map $f \colon A \to \todo{A}$ making the triangle left below commute.
\[
  \begin{tikzpicture}[xscale=2, yscale=1.5]
    \node (A) at (0,1) {$A$};
    \node (AE) at (0,0) {$A + E$};
    \node (H) at (1,0) {$(A,a)$};
    \draw[->] (A) to node[left] {$\eta_A$} (AE);
    \draw[->] (A) to node[right=2mm] {$f$} (H);
    \draw[->,dashed] (AE) to (H);
  \end{tikzpicture}
  \qquad\qquad\qquad
  \begin{tikzpicture}[xscale=2, yscale=1.5]
    \node (A) at (0,1) {$A$};
    \node (AE) at (0,0) {$A \otimes B$};
    \node (H) at (1,0) {$(A,a)$};
    \draw[->] (A) to node[left] {$\eta_A$} (AE);
    \draw[->] (A) to node[right=2mm] {$f$} (H);
    \draw[->,dashed] (AE) to (H);
  \end{tikzpicture}
\]
This extends to arbitrary algebraic effects T~\cite{plotkinpretnar:handlers}.
In particular, it makes sense for quantum measurement, as in the right diagram above.
The Frobenius monad $- \otimes B$ modeling quantum measurement with outcomes in $B$ is similar to `raising exceptions $B$',
the vertical arrows are Kleisli morphisms, and the lower right \emph{handling construct} is an FEM-algebra $A \otimes B \sxto{a} A$ that `handles exceptions $B$'; it involves the unique dashed arrow, that is induced by the free property of the Kleisli algebra $A \otimes B$, and is a morphism of FEM-algebras by Example~\ref{ex:kleislifem}. 
Intuitively, Kleisli morphisms $A \to T(B)$ are constructors that `build' an effectful computation, whereas FEM algebras $T(B)\to B$ are destructors that `handle' the effects.

Thus in general, effectful reversible computation takes place in the category of FEM-algebras of a Frobenius monad, rather than its subcategory of Kleisli algebras. See also~\cite{jacobs:blocks} for a similar reasoning in different language.

\section{Conclusion}

We have \todo{proposed} Frobenius monads \todo{as} the appropriate notion to model computational effects in the reversible setting of dagger categories. We have justified their definition from first principles, characterized them internally, shown that their Kleisli categories are again reversible, and identified the largest reversible subcategory of their Eilenberg--Moore categories.
As an example we phrased quantum measurement in the category of such Frobenius--Eilenberg--Moore algebras.

More example\todo{s} should be studied. Specifically, noncommutative Frobenius monoids on $\cat{FHilb}$ might induce monads modelling partial quantum measurement.
Also, the relationship between nondeterministic computation in $\cat{Rel}$ and groupoids should be explored. 
Finally, we leave probabilistic computation to future work.

\bibliographystyle{plain}
\bibliography{reversible}

\appendix
\section{Proofs}

This appendix verifies steps used in proofs in Section~\ref{sec:monads}.

\begin{lemma}\label{lem:strictmorphismsareiso}
  A monoid homomorphism between Frobenius monoids in a monoidal dagger category, that is also a comonoid homomorphism, is an isomorphism.
\end{lemma}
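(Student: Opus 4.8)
The plan is to show that a morphism $f\colon (A,\tinymult_A,\tinyunit_A)\to(B,\tinymult_B,\tinyunit_B)$ which is simultaneously a monoid homomorphism and a comonoid homomorphism (with respect to the comonoid structures $(\tinymult)^\dag,(\tinyunit)^\dag$ induced by the dagger) admits an explicit two-sided inverse built from the Frobenius structure. Recall that in a Frobenius monoid one has the ``cup'' $\tinyunit^\dag\circ\tinymult\colon A\otimes A\to I$ and ``cap'' $(\tinymult)^\dag\circ\tinyunit\colon I\to A\otimes A$, and the Frobenius law makes these into a nondegenerate pairing (a self-duality $A\cong A$, i.e. the snake equations hold). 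The natural candidate for $f^{-1}$ is the map obtained by bending the wires of $f$ around these cups and caps of $A$ and $B$: concretely $g = (\tinyunit_B^\dag \otimes \id_A)\circ(\id_B\otimes f\otimes\id_A)\circ(\id_B\otimes (\tinymult_A)^\dag\circ\tinyunit_A)$, the ``mate'' of $f$ through the self-dualities, suitably composed with units. I would draw this in the graphical calculus rather than in components.

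The key steps, in order: (1) record the snake/nondegeneracy equations that the Frobenius law gives for the cup and cap on each of $A$ and $B$ — each side of \eqref{eq:frobeniuslaw} equals $(\tinymult)^\dag\circ\tinymult$, and this yields the bialgebra-style identities one needs. (2) Define $g$ graphically as the mate of $f$. (3) Compute $g\circ f$: slide $f$ along the wires using that $f$ is a \emph{monoid} homomorphism ($f\circ\tinymult_A = \tinymult_B\circ(f\otimes f)$, $f\circ\tinyunit_A=\tinyunit_B$) to move one copy of $f$ past the cup of $B$, turning it into the cup of $A$ composed with $f$; then use the snake equation for $A$ to straighten the wire, leaving $\id_A$. (4) Compute $f\circ g$ symmetrically, this time using that $f$ is a \emph{comonoid} homomorphism ($(\tinymult_B)^\dag\circ f = (f\otimes f)\circ(\tinymult_A)^\dag$, $(\tinyunit_B)^\dag\circ f=(\tinyunit_A)^\dag$) to move a copy of $f$ through the cap, and the snake equation for $B$ to straighten; this gives $\id_B$. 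Hence $f$ is an isomorphism.

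The main obstacle is bookkeeping: making sure the mate $g$ is set up so that exactly the monoid-homomorphism identities are available on one side and exactly the comonoid-homomorphism identities on the other, and that the cups/caps used are the ones for which the Frobenius law actually supplies a snake identity (as opposed to merely a candidate pairing). A subtlety worth checking is that one genuinely needs \emph{both} that $f$ preserves multiplication/unit and that it preserves comultiplication/counit — one alone gives only a one-sided inverse — which is consistent with the hypothesis and is the whole point of the lemma. Everything reduces to isotopies of planar diagrams once the snake equations are in hand, so beyond the diagram chasing there is no conceptual difficulty; I would present it as two short graphical computations.
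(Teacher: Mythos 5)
Your proposal is correct and is essentially the paper's own proof: the paper defines the inverse as exactly this mate of $f$ through the cups and caps built from the two Frobenius structures, and verifies both composites by sliding $f$ through via the (co)monoid homomorphism identities and then applying the snake equations that follow from the Frobenius law and unitality. One notational quibble: your displayed formula for $g$ uses the counit $(\tinyunit)^\dag$ of $B$ where the full cup $(\tinyunit)^\dag \circ \tinymult$ of $B$ is meant, but your surrounding prose makes the intended map unambiguous.
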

\begin{proof}
  Construct an inverse to $A \sxto{f} B$ as follows:
  \[\begin{pic}[xscale=.75,yscale=.5]
      \node (f) [morphism] at (0,0) {$f$};
      \draw (f.north)
      to [out=up, in=right] +(-0.5,0.5) node (d) [dot] {}
      to [out=left, in=up] +(-0.5,-0.5)
      to [out=down, in=up] +(0,-1.8) node [below] {$B$};
      \draw (d.north) to +(0,0.4) node [dot] {};
      \draw (f.south)
      to [out=down, in=left] +(0.5,-0.5) node (d) [dot] {}
      to [out=right, in=down] +(0.5,0.5)
      to [out=up, in=down] +(0,1.8) node [above] {$A$};
      \draw (d.south) to +(0,-0.4) node [dot] {};
    \end{pic}
  \]
  The composite with $f$ gives the identity in one direction:
  \[
    \begin{pic}[xscale=.75,yscale=.5]
      \node (f) [morphism] at (0,0) {$f$};
      \draw (f.north)
      to [out=up, in=right] +(-0.5,0.5) node (d) [dot] {}
      to [out=left, in=up] +(-0.5,-0.5)
      to [out=down, in=up] +(0,-1.8) node [below] {$B$};
      \draw (d.north) to +(0,0.5) node [dot] {};
      \draw (f.south)
      to [out=down, in=left] +(0.5,-0.5) node (d) [dot] {}
      to [out=right, in=down] +(0.5,0.5)
      to [out=down, in=down] +(0,0.8)
      node (f2) [morphism, anchor=south, width=0.2cm]
      {$f$};
      \draw (d.south) to +(0,-0.5) node [dot] {};
      \draw (f2.north) to +(0,0.5) node [above] {$B$};
    \end{pic}
    =
    \begin{pic}[xscale=.75,yscale=.5]
      \draw (0,0)
      to [out=up, in=right] +(-0.5,0.5) node (d) [dot] {}
      to [out=left, in=up] +(-0.5,-0.5)
      to [out=down, in=up] +(0,-1.8) node [below] {$B$};
      \draw (d.north) to +(0,0.5) node [dot] {};
      \node (f) [morphism] at (0.5,-1.3) {$f$};
      \draw (0,0)
      to [out=down, in=left] +(0.5,-0.5) node (e) [dot] {}
      to [out=right, in=down] +(0.5,0.5)
      to [out=up, in=down] +(0,1.3) node [above] {$B$};
      \draw (f.north) to (e.south);
      \draw (f.south) to +(0,-0.4) node [dot] {};
    \end{pic}
    =
    \begin{pic}[xscale=.75,yscale=.5]
      \draw (0,0)
      to [out=up, in=right] +(-0.5,0.5) node (d) [dot] {}
      to [out=left, in=up] +(-0.5,-0.5)
      to [out=down, in=up] +(0,-1.5) node [below] {$B$};
      \draw (d.north) to +(0,0.5) node [dot] {};
      \draw (0.5,-0.5) to +(0,-0.5) node [dot] {};
      \draw (0,0)
      to [out=down, in=left] +(0.5,-0.5) node (d) [dot] {}
      to [out=right, in=down] +(0.5,0.5)
      to [out=up, in=down] +(0,1.6) node [above] {$B$};
    \end{pic}
    =
    \begin{pic}[yscale=.5]
      \draw (0,0) node [below] {$B$} to +(0,3.1) node [above] {$B$};
    \end{pic}
  \]
  The third equality uses the Frobenius law~\eqref{eq:frobeniuslaw} and unitality.
  The other composite is the identity by a similar argument.
\end{proof}

\begin{lemma}\label{lem:froblaw} 
  The functor $T\mapsto T(I)$ preserves the Frobenius law.
\end{lemma}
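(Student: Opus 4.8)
The plan is to avoid verifying the Frobenius law for the monoid $T(I)$ directly, and instead to transport it from the monad $T$ along the counit of the adjunction of Proposition~\ref{prop:monoids}. Write $S=-\otimes T(I)$. By Proposition~\ref{prop:monoids} the counit has components $\varepsilon_A=T(\rho_A)\circ\st_{A,I}\colon A\otimes T(I)\to T(A)$ and is a morphism of strong monads; in particular it is a morphism of the underlying monads, that is, a monoid homomorphism $S\to T$ in $[\cat{C},\cat{C}]_\dag$ (both $S$ and $T$ are dagger functors). The decisive point is that $\varepsilon$ is moreover \emph{unitary}: each $\rho_A$ is unitary since $\cat{C}$ is a monoidal dagger category, and each $\st_{A,I}$ is unitary since $T$ is a \emph{strong} Frobenius monad, so $\varepsilon_A$ is a composite of unitaries, with $\varepsilon_A^{-1}=\varepsilon_A^\dag=\st_{A,I}^\dag\circ T(\rho_A^\dag)$.

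Granting this, the remainder is formal. A unitary monoid isomorphism in $[\cat{C},\cat{C}]_\dag$ is, by taking daggers, simultaneously an isomorphism for the comultiplications $(\mu^S)^\dag$ and $(\mu^T)^\dag$; it therefore carries the Frobenius law of $T$ (valid because $T$ is a Frobenius monad) to the Frobenius law of $S$, so $S=-\otimes T(I)$ is a Frobenius monad. By the equivalence recorded in Example~\ref{example:tensor} — that $-\otimes B$ is a Frobenius monad if and only if $B$ is a Frobenius monoid — we conclude that $T(I)$ is a Frobenius monoid, as asserted.

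The step needing the most care is the transport argument: one must confirm that $\varepsilon$ respects not merely the multiplications but, through the dagger, the comultiplications $\mu^\dag$ as well, which is exactly what unitarity of $\varepsilon$ provides; and it is precisely here that the hypothesis that $\st$ be unitary — the difference between a \emph{strong} and a merely \emph{rather strong} Frobenius monad — is genuinely used, in line with the ``rather strong'' counterexample discussed above. A more self-contained alternative would bypass $\varepsilon$: unfold the multiplication of $T(I)$ as $\mu_I\circ T(\rho_{T(I)})\circ\st_{T(I),I}$, substitute it into both sides of the Frobenius law, and reduce to the monad identity $T(\mu_A)\circ\mu^\dag_{T(A)}=\mu_{T(A)}\circ T(\mu^\dag_A)$ at $A=I$, using naturality of $\st$, $\mu$, $\eta$, the strong-monad axioms together with their daggers (the costrong-comonad axioms), the two-strength coherence, and Mac Lane coherence — a longer computation whose only essential ingredient is, again, the Frobenius law of the monad.
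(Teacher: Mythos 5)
Your argument is correct, and it is genuinely different from the paper's. The paper proves Lemma~\ref{lem:froblaw} by a direct diagram chase (the large diagram of \figurename~\ref{fig:froblaw}), verifying the monoid Frobenius law for $T(I)$ from scratch out of the monad's Frobenius law, naturality of $\mu^\dag$, $\rho$, $\st$ and the strong-monad axioms. You instead observe that the counit $\varepsilon_A = T(\rho_A)\circ\st_{A,I}$ of the adjunction of Proposition~\ref{prop:monoids} is componentwise unitary --- using exactly the hypothesis that $\st$ is unitary, plus $T$ being a dagger functor --- and transport the law along it. The two supporting facts you rely on both check out: a unitary monoid homomorphism in $[\cat{C},\cat{C}]_\dag$ is automatically a comonoid homomorphism for the dagger-induced comultiplications (this uses that $[\cat{C},\cat{C}]_\dag$ is a monoidal dagger category, i.e.\ $(\beta\otimes\gamma)^\dag=\beta^\dag\otimes\gamma^\dag$ for horizontal composition of transformations between dagger functors), and an isomorphism respecting both multiplications and comultiplications transports the Frobenius law by the interchange law. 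The reduction to the monoid $T(I)$ via the $A=I$ instance of Example~\ref{example:tensor} is also available and non-circular, since that example and Proposition~\ref{prop:monoids} are established independently of this lemma. What your route buys is brevity and a clear isolation of where unitarity of $\st$ enters; moreover it yields as a byproduct that the counit is an isomorphism, which the paper only obtains later in Theorem~\ref{thm:strong} via Lemmas~\ref{lem:strictmorphism} and~\ref{lem:strictmorphismsareiso}, so your argument would actually streamline that proof as well. What the paper's chase buys is self-containedness: it does not appeal to the transport meta-argument or to the coherence of the dagger with horizontal composition, only to explicitly named naturality squares. If you write your version up, do spell out the two facts above rather than leaving them at ``by taking daggers''; they are the entire content of the proof.
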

\begin{proof}
  Consider the diagram in \figurename~\ref{fig:froblaw}. 
  Region (i) commutes because $T$ is a Frobenius monad, (ii) because $\mu^\dag$ is natural, (iii) because $\rho^{-1}$ is natural, (iv) because $\st^\dag$ is natural, (v) is a consequence of $T$ being a strong monad, (vi) commutes as $\rho$ is natural, (vii) and (viii) because $\st$ is natural, (ix) commutes trivially and (x) because $\st$ is natural. Regions (ii)'-(x)' commute for dual reasons. 
  Hence the outer diagram commutes, and $T\mapsto T(I)$ preserves the Frobenius law. 
\end{proof}

\begin{lemma}\label{lem:strictmorphism}
  If $T$ is a strong Frobenius monad, the counit of the adjunction of Proposition~\ref{prop:monoids} is a morphism of comonads.
\end{lemma}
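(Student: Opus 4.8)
The plan is to check directly that the counit of Proposition~\ref{prop:monoids}, with components $\varepsilon_A = T(\rho_A)\circ\st_{A,I}\colon A\otimes T(I)\to T(A)$, is a morphism of comonads from $S:=-\otimes T(I)$ to $T$, where both carry their dagger-induced comonad structure: on $T$ the comultiplication is $\mu^\dag$ and the counit $\eta^\dag$, and on $S$ — which is the Frobenius monad of Example~\ref{example:tensor} for the monoid $T(I)$ (unit $\eta_I$, multiplication $m$) — they are $\id[A]\otimes m^\dag$ and $\id[A]\otimes\eta_I^\dag$. All equalities below are modulo the coherence isomorphisms $\rho,\alpha$, and the decisive tool throughout is that $\st$ is unitary, which is precisely what makes the ``reverses'' of the strong-monad axioms usable. (One observes in passing that the Frobenius law of $T$ itself is never invoked here, only the strong-monad structure and unitarity of $\st$.)

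For the counit axiom one must show $\eta_A^\dag\circ\varepsilon_A=\id[A]\otimes\eta_I^\dag$. Naturality of $\eta^\dag$ rewrites the left-hand side as $\rho_A\circ\eta^\dag_{A\otimes I}\circ\st_{A,I}$; daggering the strong-monad unit law $\st_{A,I}\circ(\id[A]\otimes\eta_I)=\eta_{A\otimes I}$ and using $\st_{A,I}^\dag=\st_{A,I}^{-1}$ gives $\eta^\dag_{A\otimes I}\circ\st_{A,I}=\id[A]\otimes\eta_I^\dag$, so the composite equals $\rho_A\circ(\id[A]\otimes\eta_I^\dag)$, the counit of $S$.

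For the comultiplication axiom I would prove the equivalent daggered identity $\varepsilon_A^\dag\circ\mu_A=\mu^S_A\circ(\varepsilon^\dag\ast\varepsilon^\dag)_A$, where $(\varepsilon^\dag\ast\varepsilon^\dag)_A=\varepsilon^\dag_{A\otimes T(I)}\circ T(\varepsilon^\dag_A)$. On the left, naturality of $\mu$ turns $\varepsilon_A^\dag\circ\mu_A$ into $\st_{A,I}^\dag\circ\mu_{A\otimes I}\circ T^2(\rho_A^\dag)$, and the strong-monad multiplication law $\st_{A,B}\circ(\id[A]\otimes\mu_B)=\mu_{A\otimes B}\circ T(\st_{A,B})\circ\st_{A,T(B)}$, rearranged by unitarity of $\st$ into
\[
  \st_{A,I}^\dag\circ\mu_{A\otimes I}=(\id[A]\otimes\mu_I)\circ\st^\dag_{A,T(I)}\circ T(\st^\dag_{A,I}),
\]
collapses the left-hand side to $(\id[A]\otimes\mu_I)\circ\st^\dag_{A,T(I)}\circ T(\varepsilon^\dag_A)$. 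For the right-hand side I use three facts, all tracing back to unitarity of the structure maps: (i) the multiplication on $T(I)$ is $m=\mu_I\circ\varepsilon_{T(I)}$ (writing $\varepsilon_{T(I)}$ for the component of $\varepsilon$ at $T(I)$), so $\id[A]\otimes m=(\id[A]\otimes\mu_I)\circ(\id[A]\otimes\varepsilon_{T(I)})$; (ii) daggering the morphism-of-strong-functors condition for $\varepsilon$ (valid since $\varepsilon$ is a morphism of strong monads) with second argument $T(I)$ yields $\alpha^{-1}\circ\varepsilon^\dag_{A\otimes T(I)}=(\id[A]\otimes\varepsilon^\dag_{T(I)})\circ\st^\dag_{A,T(I)}$; and (iii) $\varepsilon_{T(I)}\circ\varepsilon^\dag_{T(I)}=\id$, since in general $\varepsilon_X\circ\varepsilon_X^\dag=T(\rho_X)\circ\st_{X,I}\circ\st^\dag_{X,I}\circ T(\rho_X^\dag)=\id$ by unitarity of $\st$ and $\rho$ together with functoriality of the dagger functor $T$. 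Substituting (i), then (ii), then (iii) collapses $\mu^S_A\circ(\varepsilon^\dag\ast\varepsilon^\dag)_A$ to $(\id[A]\otimes\mu_I)\circ\st^\dag_{A,T(I)}\circ T(\varepsilon^\dag_A)$ as well, matching the left-hand side.

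I expect the main obstacle to be purely organizational: keeping track of the associators and of the identifications $S(A)=A\otimes T(I)$ and $S^2(A)=(A\otimes T(I))\otimes T(I)$ carefully enough to see that the two daggered composites literally agree. The conceptual input is thin — it is the repeated exploitation of unitarity of $\st$ (and $\rho$), which among other things makes every $\varepsilon_A$ a coisometry. As with Lemma~\ref{lem:froblaw}, the cleanest presentation is likely a single large commuting diagram, each region discharged by one item from the list: naturality of $\mu$, $\eta$, $\st$, or $\rho$; unitarity of $\st$, $\rho$, or $\alpha$; a strong-monad axiom; the defining formula for the monoid structure on $T(I)$; or the coisometry property of $\varepsilon$.
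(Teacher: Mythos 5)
Your proof is correct, and the counit half is essentially identical to the paper's (naturality of $\eta^\dag$ plus the daggered unit law for $\st$, using that $\st$ is unitary). For the comultiplication half the paper instead verifies $\mu^\dag_A\circ\varepsilon_A=(\varepsilon\ast\varepsilon)_A\circ\delta^S_A$ head-on, by one large commutative diagram whose regions are discharged by naturality of $\st$ and $\mu^\dag$, the strong-functor and strong-monad axioms, invertibility of $\st$, and coherence. You dagger the equation first and then reduce both sides to the common normal form $(\id\otimes\mu_I)\circ\st^\dag_{A,T(I)}\circ T(\varepsilon^\dag_A)$, which lets you package several regions of the paper's diagram into three reusable identities: the formula $m=\mu_I\circ\varepsilon_{T(I)}$ for the induced multiplication on $T(I)$, the daggered strong-naturality of $\varepsilon$ (legitimate, since the counit is by construction a morphism of strong monads), and the coisometry property $\varepsilon_X\circ\varepsilon_X^\dag=\id$, which is where unitarity of $\st$ and $\rho$ and the dagger-functoriality of $T$ enter. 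The ingredients are the same and, as you note, neither argument ever invokes the Frobenius law of $T$; what your organization buys is a shorter, more modular chase (and the observation that $\varepsilon$ is a coisometry, which is a nice fact in its own right), at the cost of having to keep the associators and the identifications $S(A)=A\otimes T(I)$, $S^2(A)=(A\otimes T(I))\otimes T(I)$ straight by hand, exactly as you anticipate.
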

\begin{proof}
  First we show that the counit of the adjunction preserves counits of the comonads. It suffices to see that
  \[\begin{tikzpicture}[font=\small]
  \matrix (m) [matrix of math nodes,row sep=2em,column sep=4em,minimum width=2em]
  {
     A\otimes T(I) & \\
     T(A\otimes I) & A\otimes I \\ 
     T(A) &  A\\};
  \path[-stealth]
    (m-1-1) edge node [left] {$\st_{A,I}$} (m-2-1)
            edge node [right] {$\ \ \id\otimes \eta^\dag_I$} (m-2-2)
    (m-2-1) edge node [below] {$\eta^\dag_{A\otimes I}$} (m-2-2)
    (m-2-1) edge node [left] {$T(\rho_A)$} (m-3-1)
    (m-2-2) edge node [right] {$\rho_A$} (m-3-2)
     (m-3-1) edge node [below] {$\eta^\dag_A$} (m-3-2);
  \end{tikzpicture}\]
  commutes. But the rectangle commutes because $\eta^\dag$ is natural, and the triangle commutes because $T$ is a strong monad and $\st$ is an isomorphism. 

  To see that that the counit of the adjunction preserves the comultiplication, consider the following diagram:
  \[\begin{tikzpicture}[font=\scriptsize,scale=.6]
    \matrix (m) [matrix of math nodes,row sep=2em,column sep=1em,minimum width=1em]
    {
     A\otimes T(I) & A\otimes T^2(I) &A\otimes T(T(I)\otimes I) &A\otimes (T(I)\otimes T(I)) \\
      & A\otimes T^2(I) &A\otimes T(T(I)\otimes I) &(A\otimes T(I))\otimes T(I) \\
      &  &T(A\otimes (T(I)\otimes I)) &T((A\otimes T(I))\otimes I) \\
      &  & &T(A\otimes T(I)) \\
     T(A\otimes I) &  & &T^2(A\otimes I) \\
     T(A) &  & &T^2(A) \\};
    \path[->]
    (m-1-1) edge node [left] {$\st$} (m-5-1)
            edge node [above] {$\id\otimes\mu^\dag$} (m-1-2)
    (m-1-2) edge node [above] {$\id\otimes T(\rho^{-1})$} (m-1-3)
    (m-1-3) edge node [above] {$\id\otimes \st^\dag$} (m-1-4)
    (m-2-3) edge node [above] {$\id\otimes T(\rho)$} (m-2-2)
    (m-1-2) edge node [right] {$\id$} (m-2-2)
    (m-1-4) edge node [right] {$\alpha$} (m-2-4)
    (m-2-4) edge node [right] {$\st$} (m-3-4)
    (m-3-4) edge node [right] {$T(\rho)$} (m-4-4)
    (m-4-4) edge node [right] {$T(\st)$} (m-5-4)
    (m-5-4) edge node [right] {$T^2(\rho)$} (m-6-4)
    (m-5-1) edge node [left] {$T(\rho)$} (m-6-1) 
    (m-6-1) edge node [below] {$\mu^\dag$} (m-6-4)
    (m-2-3) edge node [right] {$\st$} (m-3-3) 
    (m-1-4) edge node [below] {$\quad\id\otimes \st$} (m-2-3) 
    (m-3-3) edge node [above] {$T(\alpha)$} (m-3-4)
    (m-3-3) edge node [below] {$T(\id\otimes\rho)\quad\quad$} (m-4-4)
    (m-5-1) edge node [below] {$\mu^\dag$} (m-5-4)
    (m-2-2) edge [bend right=30] (m-4-4);
    \node at (-4,1) {(i)};
    \node at (0,-0.55) {$\st$};
    \node at (0,4) {(ii)};
    \node at (0,2) {(iii)};
    \node at (4,2) {(iv)};
    \node at (4,0.5) {(v)}; 
    \node at (0,-4) {(vi)};
  \end{tikzpicture}\]
  Commutativity of region (i) is a consequence of $T$ being a strong monad, and $\st$ being an iso, (ii) commutes by definition, (iii) commutes as $\st$ is natural, (iv) because $T$ is a strong functor, (v) by coherence and finally (vi) by naturality of $\mu^\dag$. Hence the outer diagram commutes, and the counit of the adjunction preserves the comultiplication.
\end{proof}

\begin{figure*}
  \centering
  \begin{sideways}
  \begin{tikzpicture}[font=\small,xscale=.8,yscale=.9]
  \matrix (m) [matrix of math nodes,row sep=3em,column sep=3em,minimum width=2em]
  {
     T(I)\otimes T(T(I)\otimes I) & T(I)\otimes (T(I)\otimes T(I)) & & (T(I)\otimes T(I))\otimes T(I) & T(T(I)\otimes I)\otimes T(I)) \\
     &T(T(I)\otimes (T(I)\otimes I)) & T((T(I)\otimes T(I))\otimes I) & T(T(I)\otimes I)\otimes T(I) & \\
     T(I)\otimes T^2(I) & T(T(I)\otimes T(I)) & T(T(T(I)\otimes I)\otimes I) & T(T^2(I)\otimes I) & T^2(I)\otimes T(I) \\
     &T^2(T(I)\otimes I) &T^3(I) & T(T(I)\otimes I)& \\ 
     T(I)\otimes T(I) & T^2(I) & &T^2(I) & T(I)\otimes T(I) \\
     & T(T(I)\otimes I) & T^3(I) & T^2(T(I)\otimes I) & \\
     T^2(I)\otimes T(I) & T(T^2(I)\otimes I) & T(T(T(I)\otimes I)\otimes I) & T(T(I)\otimes T(I)) & T(I)\otimes T^2(I) \\
     & T(T(I)\otimes I)\otimes T(I) & T((T(I)\otimes T(I))\otimes I) & T(T(I)\otimes (T(I)\otimes I)) & \\
     T(T(I)\otimes I)\otimes T(I) & (T(I)\otimes T(I))\otimes T(I) & & T(I)\otimes (T(I)\otimes T(I)) & T(I)\otimes T(T(I)\otimes I) \\};
   \path[->]
    (m-1-1) edge node [above] {$\id\otimes \st^\dag$} (m-1-2)
    (m-1-2) edge node [above] {$\alpha$} (m-1-4)
    (m-1-4) edge node [above] {$\st\otimes\id$} (m-1-5)
    (m-1-5) edge node [right] {$T(\rho)\otimes\id$} (m-3-5)
    (m-1-1) edge node [above] {$\st$} (m-2-2)
    (m-2-2) edge node [above] {$T(\alpha)$} (m-2-3)
    (m-2-3) edge node [above] {$\st^\dag$} (m-1-4)
    (m-2-4) edge node [right] {$\st^\dag\otimes\id$} (m-1-4) 
    (m-2-4) edge node [above] {$\quad\quad T(\rho)\otimes \id$} (m-3-5)
    (m-3-1) edge node [above] {$\st$} (m-3-2)
    (m-3-1) edge node [left] {$\id\otimes T(\rho^{-1})$} (m-1-1)
    (m-3-2) edge node [left] {$T(\id\otimes\rho^{-1})$} (m-2-2)
    (m-3-2) edge node [below] {$\quad T(\rho^{-1})$} (m-2-3)
    (m-3-3) edge node [above] {$T(T(\rho)\otimes\id)$} (m-3-4)
    (m-3-3) edge node [right] {$T(\st^\dag\otimes\id)$} (m-2-3)
    (m-3-3) edge node [right] {$\st^\dag$} (m-2-4)
    (m-3-4) edge node [above] {$\st^\dag$} (m-3-5)
    (m-3-4) edge node [right] {$T(\mu\otimes\id)$} (m-4-4)
    (m-3-5) edge node [right] {$\mu\otimes\id$} (m-5-5)
    (m-4-2) edge node [left] {$T(\st^\dag)$} (m-3-2)
    (m-4-2) edge node [right] {$\quad T(\rho^{-1})$} (m-3-3)
    (m-4-2) edge node [above] {$T^2(\rho)$} (m-4-3)
    (m-4-3) edge node [above] {$T(\mu_I)$} (m-5-4)
    (m-4-4) edge node [above] {$\st^\dag$} (m-5-5)
    (m-5-1) edge node [above] {$\st$} (m-6-2)
    (m-5-1) edge node [left] {$\id\otimes\mu^\dag$} (m-3-1)
    (m-5-1) edge node [left] {$\mu^\dag\otimes\id$} (m-7-1)
    (m-5-2) edge node [above] {$\mu^\dag_{T(I)}$} (m-4-3)
    (m-5-2) edge node [above] {$T(\mu^\dag_I)$} (m-6-3)
    (m-5-4) edge node [left] {$T(\rho^{-1})$} (m-4-4)
    (m-6-2) edge node [right] {$T(\rho)$} (m-5-2)
    (m-6-2) edge node [left] {$T(\mu^\dag\otimes\id)$} (m-7-2)
    (m-6-3) edge node [above] {$\mu_{T(I)}$} (m-5-4)
    (m-6-3) edge node [above] {$\quad T^2(\rho^{-1})$} (m-6-4)
    (m-7-1) edge node [above] {$\st$} (m-7-2)
    (m-7-1) edge node [left] {$T(\rho^{-1})\otimes\id$} (m-9-1)
    (m-7-1) edge node [above] {$\ \ \quad T(\rho^{-1})\otimes\id$} (m-8-2)
    (m-7-2) edge node [above] {$T(T(\rho^{-1})\otimes\id)$} (m-7-3)
    (m-7-3) edge node [above] {$T(\rho)$} (m-6-4)
    (m-7-4) edge node [right] {$T(\st)$} (m-6-4)
    (m-7-4) edge node [above] {$\st^\dag$} (m-7-5)
    (m-7-5) edge node [right] {$\id\otimes\mu$} (m-5-5)
    (m-8-2) edge node [above] {$\st$} (m-7-3)
    (m-8-3) edge node [right] {$T(\st\otimes\id)$} (m-7-3)
    (m-8-3) edge node [right] {$\quad T(\rho)$} (m-7-4)
    (m-8-3) edge node [above] {$T(\alpha^{-1})$} (m-8-4)
    (m-8-4) edge node [right] {$T(\id\otimes\rho)$} (m-7-4)
    (m-8-4) edge node [above] {$\ \ \st^\dag$} (m-9-5)
    (m-9-1) edge node [below] {$\st^\dag\otimes\id $} (m-9-2)
    (m-9-2) edge node [left] {$\st\otimes\id $} (m-8-2)
    (m-9-2) edge node [above] {$\st$} (m-8-3)
    (m-9-2) edge node [below] {$\alpha^{-1}$} (m-9-4)
    (m-9-4) edge node [below] {$\id\otimes \st$} (m-9-5)
    (m-9-5) edge node [right] {$\id\otimes T(\rho)$} (m-7-5)
    (m-6-2) edge [bend left=30] (m-4-2)
    (m-6-4) edge [bend right=30] (m-4-4);
    \node at (-6.5,0) {$\mu^\dag$};
    \node at (6.25,0) {$\mu$}; 
    \node at (0,0) {(i)}; 
    \node at (-5,1) {(ii)}; 
    \node at (5,-1) {(ii)'}; 
    \node at (3,2) {(iii)}; 
    \node at (-3,-2) {(iii)'};
    \node at (8,2.5) {(iv)}; 
    \node at (-8,-2.5) {(iv)'}; 
    \node at (-8,1.5) {(v)'}; 
    \node at (8,-1.5) {(v)}; 
    \node at (2.75,-3.5) {(vi)}; 
    \node at (-2.75,3.5) {(vi)'};
    \node at (-2.75,-5.25) {(vii)}; 
    \node at (2.75,5.25) {(vii)'}; 
    \node at (-5,-4.5) {(viii)}; 
    \node at (5,4.5) {(viii)'};
    \node at (-8,-6) {(ix)}; 
    \node at (8,6) {(ix)'};
    \node at (-8,5) {(x)};
    \node at (8,-5) {(x)'}; 
    \node at (-4.25,4.75) {(xi)}; 
    \node at (4.5,-4.75) {(xi)'};
    \node at (-2.65,6.5) {(xii)}; 
    \node at (2.65,-6.5) {(xii)'};
  \end{tikzpicture}
  \end{sideways}
  \caption{Diagram proving that $T\mapsto T(I)$ preserves the Frobenius law.}
  \label{fig:froblaw}
\end{figure*}

\begin{corollary}\label{cor:special}
  The equivalence of Theorem~\ref{thm:strong} restricts to an equivalence between special Frobenius monoids and special strong Frobenius monads.
\end{corollary}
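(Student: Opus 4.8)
The plan is to read the Corollary as a statement that the equivalence of Theorem~\ref{thm:strong} restricts to the evident full subcategories, so that essentially all that has to be checked is that each of the two functors $B \mapsto -\otimes B$ and $T \mapsto T(I)$ carries special objects to special objects. Being \emph{special} is a property, not extra structure, and morphisms are unchanged, so the special Frobenius monoids form a full subcategory of the Frobenius monoids and likewise the special strong Frobenius monads form a full subcategory of the strong Frobenius monads. For a full subcategory, an equivalence automatically restricts once both functors land inside the relevant subcategories: the unit and counit isomorphisms of the equivalence are then morphisms in the subcategories, being morphisms between objects of a full subcategory. Hence the whole content reduces to the two preservation claims.

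For $B \mapsto -\otimes B$: by Example~\ref{example:tensor} the multiplication of $-\otimes B$ at $A$ is $\mu_A = \id[A] \otimes \tinymult$, so $\mu_A \circ \mu_A^\dag = \id[A] \otimes \big(\tinymult \circ (\tinymult)^\dag\big)$. This is $\id[A \otimes B]$ exactly when $\tinymult \circ (\tinymult)^\dag = \id[B]$ — for the nontrivial implication specialize to $A = I$ and use unitality of $\rho$, the other is immediate. Thus $-\otimes B$ is special precisely when $B$ is, and in particular $B \mapsto -\otimes B$ preserves speciality.

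For $T \mapsto T(I)$: recall that the monoid multiplication on $T(I)$ is $\mu_I$ precomposed with the canonical isomorphism $T(I) \otimes T(I) \xrightarrow{\ \st\ } T(T(I) \otimes I) \xrightarrow{\ T(\rho)\ } T^2(I)$ assembled from the strength and $\rho$. Since $T$ is a strong \emph{Frobenius} monad, $\st$ is unitary by definition, and $\rho$ is a unitary coherence isomorphism, so $T(\rho)$ is unitary because $T$ is a dagger functor. Consequently $\tinymult \circ (\tinymult)^\dag = \mu_I \circ T(\rho) \circ \st \circ \st^\dag \circ T(\rho)^\dag \circ \mu_I^\dag = \mu_I \circ \mu_I^\dag$, so if $T$ is special then so is $T(I)$. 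For the reverse implication — needed to see that $T(I)$ special forces $\mu_A \circ \mu_A^\dag = \id[T(A)]$ for \emph{all} $A$, not just $A = I$ — invoke Theorem~\ref{thm:strong}: its counit $A \otimes T(I) \xrightarrow{T(\rho_A) \circ \st} T(A)$ is a composite of unitaries, hence a unitary isomorphism of strong Frobenius monads $-\otimes T(I) \Rightarrow T$; conjugating the natural transformations $\mu$ and $\mu^\dag$ along a unitary monad isomorphism preserves and reflects the identity $\mu \circ \mu^\dag = \id$, so $T$ is special iff $-\otimes T(I)$ is, which by the previous paragraph happens iff $T(I)$ is special.

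Putting these together: both functors of Theorem~\ref{thm:strong} preserve speciality, both subcategories are full, so the equivalence restricts as claimed. I do not expect a real obstacle; the one place requiring care is correctly identifying the monoid structure on $T(I)$ and verifying that the strength $\st$ and the coherence isomorphism $\rho$ are unitary, so that speciality of $T$ — a priori a statement about all $\mu_A$ — transfers through the identification $T(I) \otimes T(I) \cong T^2(I)$ to speciality of the monoid $T(I)$ and conversely.
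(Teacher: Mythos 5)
Your proposal is correct and follows essentially the same route as the paper: both directions reduce to checking that $B\mapsto -\otimes B$ and $T\mapsto T(I)$ preserve speciality, the first by the computation $\mu_A\circ\mu_A^\dag=\id\otimes(\tinymult\circ(\tinymult)^\dag)$ and the second by cancelling the unitaries $\st$ and $T(\rho)$ to reduce to $\mu_I\circ\mu_I^\dag=\id$, which is exactly the paper's commutative diagram. Your extra argument that $T(I)$ special forces $T$ special is correct but not needed, since restricting an equivalence to full subcategories only requires each functor to land in the corresponding subcategory.
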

\begin{proof}
  The commutative case follows from~\cite{wolff:monads}. 
  If the Frobenius monoid is special, so is the monad, by trivial graphical manipulation of Example~\ref{example:tensor}. Conversely, if the Frobenius monad $T$ is special, the following diagram commutes:
  \[\begin{tikzpicture}
  \matrix (m) [matrix of math nodes,row sep=2em,column sep=2.5em,minimum width=1.5em,font=\small]
    { T(I) & T^2(I) & T\big( T(I) \otimes I \big) & T(I) \otimes T(I) \\
      T(I) & T^2(I) & T\big( T(I) \otimes I \big) \\};
    \path[->]
    (m-1-1) edge node [left] {$\id$} (m-2-1)
            edge node [above] {$\mu^\dag$} (m-1-2)
    (m-1-2) edge node [left] {$\id$} (m-2-2)
            edge node [above] {$T(\rho^{-1})$} (m-1-3)
    (m-1-3) edge node [left] {$\id$} (m-2-3)
            edge node [above] {$\st^{-1}$} (m-1-4)
    (m-1-4) edge node [below] {$\st$} (m-2-3)
    (m-2-3) edge node [below] {$T(\rho)$} (m-2-2)
    (m-2-2) edge node [below] {$\mu$} (m-2-1);
  \end{tikzpicture}\]
  and so $T(I)$ is special.
\end{proof}

\end{document}